\spnewtheorem{definitionitalic}{Definition}{\bf}{\it}
\spnewtheorem*{nonumtheorem}{Theorem}{\bf}{\it}
\spnewtheorem{assumption}{Assumption}{\bf}{\it}
\DeclareMathOperator{\coker}{coker}
\DeclareMathOperator{\Gr}{Gr}
\DeclareMathOperator{\rank}{rank}
\DeclareMathOperator{\Der}{Der}
\DeclareMathOperator{\CDerEnd}{CDer}
\DeclareMathOperator{\image}{image}
\DeclareMathOperator{\Hom}{Hom}
\DeclareMathOperator{\End}{End}
\newcommand{\eps}{\varepsilon}
\newcommand{\p}{\partial}
\renewcommand{\epsilon}{USE eps INSTEAD}
\newcommand{\R}{\mathbbm{R}}
\newcommand{\Z}{\mathbbm{Z}}
\renewcommand{\subset}{\subseteq}
\renewcommand{\supset}{\supseteq}
\newcommand{\secref}[1]{Section \ref{sec:#1}}
\newcommand{\lemmaref}[1]{Lemma \ref{lemma:#1}}
\newcommand{\theoremref}[1]{Theorem \ref{theorem:#1}}
\newcommand{\defref}[1]{Definition \ref{def:#1}}
\newcommand{\catstyle}[1]{\textnormal{\textsf{{\fontseries{sbc}\selectfont #1}}}}
\newcommand{\E}{\mathcal{E}}
\newcommand{\MC}{\catstyle{MC}}
\newcommand{\RR}{{C^\infty}}
\newcommand{\px}{\mathcal{P}}
\newcommand{\ax}{\mathcal{A}}
\newcommand{\sbounce}{\textnormal{bounce}}
\newcommand{\sfree}{\textnormal{free}}
\newcommand{\pbounce}{\px_{\sbounce}}
\newcommand{\pfree}{\px_{\sfree}}
\newcommand{\abounce}{\ax_{\sbounce}}
\newcommand{\afree}{\ax_{\sfree}}
\newcommand{\inj}{\hookrightarrow}
\newcommand{\surj}{\twoheadrightarrow}
\newcommand{\sv}[1]{\mathbf{s}_{#1}}
\newcommand{\transition}{\chi}
\newcommand{\citeone}{\cite{rt1}}
\newcommand{\citetwo}{\cite{rt2}}
\newcommand{\ddc}[1]{d_{#1}}
\newcommand{\oldw}{K}
\newcommand{\Donezero}{D_{10}}
\newcommand{\Dthreetwo}{D_{32}}
\newcommand{\Dfourthree}{D_{43}}
\newcommand{\slashedDfourthree}{\slashed{D}_{43}}
\newcommand{\Dthreeone}{\slashed{D}_{3}}
\newcommand{\spatialm}{S}
\begin{document}

\title{Filtered expansions in general relativity II}
%\subtitle{Do you have a subtitle?\\ If so, write it here}

%\titlerunning{Short form of title}        % if too long for running head

\author{Michael Reiterer \and
        Eugene Trubowitz
}

%\authorrunning{Short form of author list} % if too long for running head

\institute{Michael Reiterer,
              ETH Zurich,
              \email{michael.reiterer@protonmail.com}             \\
            \rule{55pt}{0pt} \text{Present affiliation: The Hebrew University of Jerusalem}\\
           %\and
           Eugene Trubowitz,
              ETH Zurich,
              \email{eugene.trubowitz@math.ethz.ch}
}

\date{Received: date / Accepted: date}
% The correct dates will be entered by the editor

\maketitle

\begin{abstract}
This is the second of two papers
in which we construct formal power series solutions
in external parameters
to the vacuum Einstein equations,
implementing one bounce for the Belinskii-Khalatnikov-Lifshitz (BKL) proposal
for spatially inhomogeneous spacetimes.
Here we show that spatially inhomogeneous perturbations of spatially
homogeneous elements are unobstructed.
A spectral sequence for a filtered complex,
and a homological contraction based on gauge-fixing, are used to do this.

\keywords{First keyword \and Second keyword \and More}
% \PACS{PACS code1 \and PACS code2 \and more}
% \subclass{MSC code1 \and MSC code2 \and more}
\end{abstract}

\section{Introduction}

This is a continuation of \citetwo,
using the homological,
graded Lie algebra (gLa) framework for general relativity in \citeone.
Familiarity with both these papers is assumed,
this introduction only gives pointers to the pertinent material.

In {\citeone} we defined a gLa $\E$
whose nondegenerate Maurer-Cartan (MC) elements
are the solutions to the vacuum Einstein equations. 
In {\citetwo} we first defined what a filtered expansion is in general,
then introduced a specific 1-index gLa filtration called BKL filtration,
and derived 2- and 3-index filtrations from it.
Their Rees algebras
    \begin{align*}
      \pbounce & \;=\;
      \{ \textstyle\sum_{p_2,p_3} s^{p_2}_2 s^{p_3}_3 x_{p_2p_3}
      \mid x_{p_2p_3} \in F_{p_2p_3}\E\}\\
      \pfree & \;=\;
      \{ \textstyle\sum_{p_1,p_2,p_3}
      s^{p_1}_1 s^{p_2}_2 s^{p_3}_3 x_{p_1p_2p_3}
      \mid x_{p_1p_2p_3} \in F_{p_1p_2p_3}\E\}
\end{align*}
are free over $\R[[s_2,s_3]]$ and
      $\R[[s_1,s_2,s_3]]$,
  subalgebras of $\E[[s_2,s_3]]$ and $\E[[s_1,s_2,s_3]]$ respectively.
  Here $s_1,s_2,s_3$ are formal symbols.
The associated graded gLa are
\[
  \abounce = \pbounce/(s_2,s_3)
  \qquad
  \afree = \pfree/(s_1,s_2,s_3)
\]

In this paper we take certain MC-elements in $\abounce$ and $\afree$
and calculate the homology of the associated differentials.
An essential point is that we 
study these differentials on the associated gradeds, $\abounce$ and $\afree$,
because they control formal perturbations on $\pbounce$ and $\pfree$,
which is what one is actually interested in.
See {\citeone}, {\citetwo}.
With this understanding, this paper is set in the associated gradeds.

In {\citetwo} we constructed quite general MC-elements in $\abounce$ and $\afree$.
They depend on several function parameters,
and so do their differentials.
\emph{We only study the
differential associated to spatially homogeneous MC-elements.
These differentials control spatially inhomogeneous perturbations,
not merely homogeneous ones.}
This keeps calculations manageable.
We speculate that results would be similar starting from
the spatially inhomogeneous MC-elements in \citetwo, but we have not checked this.

The differential on the associated gradeds defines a filtered complex that
is studied using two tools:
a spectral sequence for a filtered complex, reviewed in {\citetwo};
and a contraction based on gauge-fixing {\citeone},
a tool specific to general relativity.

\begin{nonumtheorem}[No obstructions - informal version]
  For homogeneous MC-elements in $\afree$ and $\abounce$,
  the associated differential has vanishing second homology.
\end{nonumtheorem}
Hence by {\citetwo}, spatially inhomogeneous
formal power series `free motion' and `bounce' solutions exist,
implementing building blocks of the BKL proposal.
We first suggested in \cite{fil} that these obstructions could vanish.

The main ingredients from {\citetwo} are:
\begin{enumerate}[(J1)]
  \item \label{dep:npar} The over-parametrization lemma \cite[Lemma 8]{rt2}.
  \item \label{dep:mcefreehom} The remark about homogeneous MC-elements in $\afree$ \cite[Remark 2]{rt2}.
  \item \label{dep:bouncesol} The theorem about MC-elements in $\abounce$ \cite[Theorem 3]{rt2}.
\end{enumerate}
\newcommand{\depref}[1]{\textnormal{(J\ref{dep:#1})}}

\newcommand{\spx}{\,\,\,}
 \begin{table}
   \footnotesize{
   \[
   \begin{array}{|l|l|l|}
     \hline
     \alpha = p_1p_2p_3 & \text{$G_\alpha \E_G \subset \E$ is the $\RR$-span of these elements} & \text{$\RR$-rank} \\
     & 
     \text{note that
     $G_\alpha \E = G_\alpha \E_G \oplus \theta_0(G_\alpha \E_G)$
     }
     &\\
     \hline
     \hline
     000 & \Der(\RR),\spx\sigma_0,\spx\theta_0 \sigma_0 + \theta_1\sigma_1,\spx
             \theta_0\sigma_0 + \theta_2\sigma_2,\spx \theta_0\sigma_0 + \theta_3\sigma_3, & 9\\
             & 
             \theta_2\theta_3\sigma_{23}+\theta_3\theta_1\sigma_{31}+\theta_1\theta_2\sigma_{12}
             + 2\theta_0\theta_1\sigma_1+2\theta_0\theta_2\sigma_2+2\theta_0\theta_3\sigma_3 & \\
     \hline
     200 & -\theta_1\sigma_{23} + \theta_2\sigma_{31} + \theta_3\sigma_{12} & 1\\
     \hline
     020 & +\theta_1\sigma_{23} - \theta_2\sigma_{31} + \theta_3\sigma_{12} & 1\\     
     \hline
     002 & +\theta_1\sigma_{23} + \theta_2\sigma_{31} - \theta_3\sigma_{12} & 1\\
     \hline
     011 & \sigma_1,\spx
             \sigma_{23},\spx
             \theta_1 \Der(\RR),\spx
             \theta_0\sigma_1+\theta_1\sigma_0,\spx
             \theta_2\sigma_3+\theta_3\sigma_2, & 11\\
           & \theta_3\sigma_{31},\spx
             \theta_2\sigma_{12},\spx
             \theta_0\theta_2\sigma_{12}+\theta_1\theta_2\sigma_2 & \\
             \hline
     101 & \sigma_2,\spx
             \sigma_{31},\spx
             \theta_2 \Der(\RR),\spx
             \theta_0\sigma_2+\theta_2\sigma_0,\spx
             \theta_3\sigma_1+\theta_1\sigma_3, & 11\\
           & \theta_1\sigma_{12},\spx
             \theta_3\sigma_{23},\spx
             \theta_0\theta_3\sigma_{23}+\theta_2\theta_3\sigma_3 & \\
             \hline
     110 & \sigma_3,\spx
             \sigma_{12},\spx
             \theta_3 \Der(\RR),\spx
             \theta_0\sigma_3+\theta_3\sigma_0,\spx
             \theta_1\sigma_2+\theta_2\sigma_1, & 11\\
           & \theta_2\sigma_{23},\spx
             \theta_1\sigma_{31},\spx
             \theta_0\theta_1\sigma_{31}+\theta_3\theta_1\sigma_1 & \\
             \hline
             %-----------
     211 & \theta_2\sigma_3 - \theta_3\sigma_2,\spx
             \theta_2\theta_3 \Der(\RR), & 7\\
           & \theta_0\theta_2\sigma_3-\theta_0\theta_3\sigma_2-2\theta_2\theta_3\sigma_0,\spx
             \theta_0\theta_2\sigma_3+\theta_0\theta_3\sigma_2-2\theta_1\theta_2\sigma_{31} & \\
             \hline
     121 & \theta_3\sigma_1 - \theta_1\sigma_3,\spx
             \theta_3\theta_1 \Der(\RR), & 7\\
           & \theta_0\theta_3\sigma_1-\theta_0\theta_1\sigma_3-2\theta_3\theta_1\sigma_0,\spx
             \theta_0\theta_3\sigma_1+\theta_0\theta_1\sigma_3-2\theta_2\theta_3\sigma_{12} & \\
             \hline
     112 & \theta_1\sigma_2 - \theta_2\sigma_1,\spx
             \theta_1\theta_2 \Der(\RR), & 7\\
           & \theta_0\theta_1\sigma_2-\theta_0\theta_2\sigma_1-2\theta_1\theta_2\sigma_0,\spx
             \theta_0\theta_1\sigma_2+\theta_0\theta_2\sigma_1-2\theta_3\theta_1\sigma_{23} & \\
             \hline
     222 & \theta_0\theta_1\sigma_{23}+\theta_0\theta_2\sigma_{31}
             + \theta_0\theta_3\sigma_{12}
             - 2 \theta_2\theta_3\sigma_1
             - 2 \theta_3\theta_1\sigma_2
             - 2 \theta_1\theta_2\sigma_3, & 6 \\
           & \theta_1\theta_2\theta_3 \Der(\RR), & \\
           & \theta_0\theta_2\theta_3\sigma_1
            +\theta_0\theta_3\theta_1\sigma_2
            +\theta_0\theta_1\theta_2\sigma_3
            +3\theta_1\theta_2\theta_3\sigma_0 &
     \\
     \hline
     \text{else} & \text{none} & 0\\
     \hline
 \end{array}
 \]}
 \caption{%%%
 Definition of $G_\alpha\E_G$
 associated to a 
 conformally orthonormal basis $\theta_0,\theta_1,\theta_2,\theta_3$.
 We omit wedges, so $\theta_0\theta_1 = \theta_0 \wedge \theta_1$.
 The basis-dependent injection $\Der(\RR) \inj \CDerEnd(W)$ is implicit \citeone.
 All elements in the table are
 elements of $\E$ via the canonical surjection $\mathcal{L} \surj \E$.
 }\label{table:EG}
 \end{table}

%%%%%%%%%%%%%%%%%%%%%%%%%%%%%%%%%%%%%%%%%%%%%%%%%%%%%%%%%%%%%%%%%%%%%%%%%%%%%%
\section{The homology in $\afree$\\
of some homogeneous Maurer-Cartan elements}\label{sec:homology}
Let $\gamma \in \MC(\afree)$ be a homogeneous MC-element
as in Lemma \depref{mcefreehom}.
Here we compute the homology $H^k(d)$ of the differential
\[
  d = [\gamma,-]\;\;\in\;\; \End^1(\afree)
\]

%-----------------------------------------------------------------------
\begin{assumption}[In force through \secref{homology}] \label{assp:mc}
Suppose $\gamma \in \MC(\afree)$ as above and:
  \begin{itemize}
    \item $g_1^0$, $g_2^0$, $g_3^0$
      are nonzero and pairwise different. The sum of any two is negative.
    \item $g_1^1$, $g_2^1$, $g_3^1$ are nonzero. 
    \item $\spatialm$ is an open subset of a smooth 3-dimensional Lie group
      containing the identity element,
      and $D_1$, $D_2$, $D_3$ is a basis of left-invariant vector fields.
    \item There is an $\eps > 0$ and a diffeomorphism
      $(x^1,x^2,x^3) : \spatialm \to (-\eps,\eps)^3$
      such that $(0,0,0)$ is the identity element and,
      with $\p_1,\p_2,\p_3$ the partial
      derivatives for the coordinate system $x^1,x^2,x^3$,
      one has\footnote{%
        Let $\phi^i$ be the flow associated to $D_i$
        on the Lie group; this flow exists.
        Define the map $(x^1,x^2,x^3) \mapsto (\phi^2_{x^2} \circ 
        \phi^1_{x^1} \circ \phi^3_{x^3})(e)$ where $e$ is the identity element.
        Using the inverse function theorem, we can construct a coordinate system
        with the required properties.
      }
      \begin{equation}\label{eq:d2cond}
        D_2 = \p_2
        \qquad
        D_1|_{x^2 = 0} = \p_1
      \end{equation}
  \end{itemize}
  \end{assumption}
  The last assumption is technical.
  An alternative would be to work not over $C^{\infty}(\spatialm,\R)$
  but over germs of smooth functions at the identity
  element of the Lie group.
  \begin{theorem}[No obstructions] \label{theorem:noobstr}
    With Assumption \ref{assp:mc}
    we have:
    \begin{itemize}
      \item $H^0(d) \simeq \R^3$, the right-invariant vector fields.
      \item $\Gr H^1(d) \simeq H^1(D_{00}) \oplus \ker \Dthreeone$,
        see Lemmas \ref{lemma:hom0page} and \ref{lemma:kerd13}.
      \item $H^2(d) = 0$.
      \item $H^3(d) = 0$.
      \item $H^4(d) = 0$.
    \end{itemize}
    Here $\simeq$ is an isomorphism as vector spaces,
    and $\Gr$ is the associated graded for the decreasing filtration
    coming from
    the $\Z$-grading of $\afree$ by $p_2 + p_3$.
  \end{theorem}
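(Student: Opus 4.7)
The plan is to study $(\afree, d)$ with the spectral sequence for the decreasing filtration $F^p \afree = \bigoplus_{p_2 + p_3 \geq p} G_{p_1 p_2 p_3} \afree$, combined with the gauge-fixing contraction of \citeone. Table \ref{table:EG} ensures finiteness: at each total degree only the multi-indices listed there can contribute, so the spectral sequence is essentially a finite diagram chase. The multi-graded form of $\gamma$ supplied by Lemma \depref{mcefreehom} lets me decompose $[\gamma, -]$ into a collection of operators, with the pieces that preserve $p_2 + p_3$ assembling into the $E_0$-differential and the remaining pieces producing the higher-page differentials $D_{10}, D_{32}, D_{43}, \slashedDfourthree, \Dthreeone$ already named in the preamble.

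First I would compute $H^0(d)$ and isolate the $H^1(D_{00})$ summand of $\Gr H^1(d)$ directly on the $E_0$-page. At degree zero a cocycle is a vector field together with endomorphism data that commute with the $(0,0,0)$-action in Table \ref{table:EG}; the pairwise distinctness of $g_1^0, g_2^0, g_3^0$ from Assumption \ref{assp:mc} then picks out exactly the three right-invariant vector fields, giving $H^0(d) \simeq \R^3$. At degree one, the $E_0$-cohomology in the $(p_2, p_3) = (0,0)$ block is by construction $H^1(D_{00})$, the object of Lemma \ref{lemma:hom0page}.

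Second, I would turn the higher pages of the spectral sequence, using the gauge-fixing contraction of \citeone to compress the $p_1$-direction and handle the degrees of freedom associated with the three-dimensional Lie group. The coordinate hypothesis \eqref{eq:d2cond} is what makes homotopy inverses of $D_1$ and $D_2$ constructible as actual operators on $C^\infty(\spatialm, \R)$: integration along $x^1$ on the slice $\{x^2 = 0\}$, then along $x^2$, gives explicit primitives because the image of the chart is the cube $(-\eps, \eps)^3$. The algebraic hypotheses on the $g_i^j$ (pairwise different $g_i^0$, $g_i^0 + g_j^0 < 0$, $g_i^1 \neq 0$) then supply the invertibilities needed to kill the blocks indexed by $200, 020, 002, 211, 121, 112, 222$ in Table \ref{table:EG}. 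The surviving $H^1$-contribution $\ker \Dthreeone$ from Lemma \ref{lemma:kerd13} appears at the last nontrivial page, and at degrees two through four every $E_0$-cocycle turns out to be eliminated by some $D_{ij}$ before reaching $E_\infty$.

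The main obstacle I anticipate is the combinatorial bookkeeping of matching each block of Table \ref{table:EG} with its role on each page, and verifying that the gauge-fixing contraction respects the full multi-grading so that it can be iterated compatibly with the classes already computed at $E_0$. The most delicate analytic point is that the homotopies derived from \eqref{eq:d2cond} must be globally defined on $\spatialm$ rather than merely as germs at the identity; this is exactly why the coordinate system is constructed with image $(-\eps, \eps)^3$. The sign condition $g_i^0 + g_j^0 < 0$ is expected to enter precisely once, to secure invertibility of one specific algebraic operator appearing on a higher page.
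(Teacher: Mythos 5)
Your outline matches the paper's architecture at the coarsest level (filtration by $p_2+p_3$, spectral sequence, gauge-fixing contraction, the operators $\Donezero$, $\Dthreetwo$, $\Dfourthree$, $\Dthreeone$), but several of the mechanisms you describe are not the ones that make the proof work, and the part that actually carries the theorem is missing. First, the contraction of Lemma \ref{lemma:defret} is applied \emph{before} the spectral sequence, and what it eliminates is the time direction, not the $p_1$-index: it retracts $\afree$ (a free $C^\infty(\R\times\spatialm,\R)$-module) onto $\ker\oldw\simeq C=\mathcal{U}_G/t\mathcal{U}_G$, a free $C^\infty(\spatialm,\R)$-module of rank $72$ that still carries the full $p_1p_2p_3$-grading. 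Surjectivity of $\oldw$ comes from global existence for linear ODEs in $t$ (diagonal blocks $D_0\mathbbm{1}+M$), not from \eqref{eq:d2cond}; the coordinate hypothesis \eqref{eq:d2cond} is used only much later, to prove $\coker\Dfourthree=0$ and $\coker\Dthreeone=0$. Contracting first is also what gives $H^4(d)=0$ immediately ($C^4=0$) and what turns the $E_0$-differentials into explicit $C^\infty(\spatialm,\R)$-linear matrices that can be computed. Second, $H^0(d)\simeq\R^3$ does not come from the pairwise distinctness of the $g_i^0$ at $E_0$: the space $H^0(D_{00})$ is a rank-$3$ free $C^\infty(\spatialm,\R)$-module (arbitrary function coefficients on $D_1,D_2,D_3$), and the cut down to $\R^3$ happens on the first page as $\ker\Donezero'$, identified with the right-invariant vector fields because every left-invariant field commutes with every right-invariant one.

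The genuine gap is in the vanishing of $H^2$ and $H^3$, which you assert as ``every $E_0$-cocycle turns out to be eliminated by some $D_{ij}$.'' That elimination is precisely the content of the theorem and rests on three specific facts proved from the explicitly computed matrices of $d_C$: $\ker\Dthreetwo=0$, $\coker\Dfourthree=0$, and $\coker\Dthreeone=0$. The last is the hardest step of the paper: it requires writing the second-page differential as the $10\times10$ operator $\delta = \ddc{1\to3}-\ddc{2\to3}h_2\ddc{1\to2}$, reducing via a chain of row/column eliminations (using the explicit $\Dthreetwo$ and $\Dfourthree$) to the $5\times10$ matrix $\slashed\delta$ and then to the $2\times2$ system \eqref{eq:2by2sys}, and finally solving that system by an ODE argument on the slice $\{x^2=0\}$ using \eqref{eq:d2cond} and $g_1^1\neq 0$. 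None of this can be read off from Table \ref{table:EG} or from the qualitative shape of the spectral sequence; it needs the computed entries of the $72\times72$ differential. Your proposal also misstates which blocks die where (e.g.\ $G_{20}C$, $G_{02}C$ contribute rank-$1$ classes to $E_1$ that are only killed by the injectivity of $\Dthreetwo$, and $G_{121}C$, $G_{112}C$ contribute rank-$5$ classes at $k=2$ that survive to $E_1$ and are killed by the surjectivity of $\Dfourthree$ and $\Dthreeone$), so the bookkeeping you flag as the main obstacle is indeed not yet under control.
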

  \begin{proof}
    This follows from the remainder of \secref{homology},
    and the fact {\citetwo}
    that the last page of the spectral sequence is isomorphic
    to the associated graded of the homology.
    For example, this way we get $\Gr H^2(d) = 0$, which implies $H^2(d)=0$.
  \qed\end{proof}
%%%%%%%%%%%%%%%%%%%%%%%%%%%%%%%%%%%%%%%%%%%%%%%%%%%%%%%%%%%%%%%%%%%%%%
%----------------------------------------------------------------------------
\subsection{Overview}
Recall {\citetwo} that $\afree \simeq \mathcal{U}$
is graded by tuples $\alpha = p_1p_2p_3$, see Table \ref{table:EG}.
There is a corresponding decreasing filtration
$F_{\geq \alpha} \afree \simeq F_{\geq \alpha} \mathcal{U}$ given by
\[
  F_{\geq \alpha} \mathcal{U} \;=\; \textstyle\bigoplus_{\beta \geq \alpha} G_{\beta} \mathcal{E}
\]
It is respected by the differential,
so we have a decreasingly filtered complex.
That is, we have
    $d(F_{\geq\alpha} \afree)\subset
      F_{\geq\alpha}\afree$.
The homology is calculated in two steps:
\begin{itemize}
  \item Via gauge fixing,
    we construct a contraction of $(\afree,d)$
    onto a filtered complex $(C,d_C)$,
    so they are quasi-isomorphic.
    The space $C$ 
    is a free $C^{\infty}(\spatialm,\R)$-module.
    The contraction is a variant of the one in {\citeone}.
    Since $C^4 = 0$ we get $H^4(d)=0$.
  \item We use a spectral sequence to compute
    the homology of $(C,d_C)$.
    The spectral sequence is relative to the decreasing filtration of $C$ associated
    to the single index $p_2+p_3$.
    We found this to be more useful than
    the total index $p_1+p_2+p_3$.
\end{itemize}
%--------------------------------------------------------------

\newcommand{\jj}[1]{\stackrel{\phantom{(0)}}{#1}}
We anticipate
the structure of the spectral sequence.
In the following diagrams,
left to right corresponds to the filtration index $p_2+p_3 = 0,1,2,3,4$.
The homological degrees are suppressed;
one should visualize each bullet as being a complex
with arrows perpendicular to the page.
A $(k)$ over a bullet means that the complex
is concentrated in homological degree $k$.
We start with the 0th page, which is disconnected as always
(differentials such as $D_{ii}$ are introduced later, e.g.~\secref{0thpage}):
\[
\xymatrix{%
  \jj\bullet \ar@(dl,dr)[]_{D_{00}} &
  \jj\bullet \ar@(dl,dr)[]_{D_{11}} &
  \jj\bullet \ar@(dl,dr)[]_{D_{22}} &
  \jj\bullet \ar@(dl,dr)[]_{D_{33}} &
  \jj\bullet \ar@(dl,dr)[]_{D_{44}}
}
\]
The differentials
on the 0th page are linear over $C^{\infty}(\spatialm,\R)$,
    and all the spaces on the 0th and 1st pages are free $C^{\infty}(\spatialm,\R)$-modules.
The 1st page is:
\[
\xymatrix{%
  \stackrel{\mathclap{(0)\oplus (1)}}{\bullet} \ar[r]_{\Donezero}
  & \stackrel{(1)}{\bullet}
  & \stackrel{(1)}{\bullet} \ar[r]_{\Dthreetwo}
  & \stackrel{(2)}{\bullet} \ar[r]_{\Dfourthree}
  & \stackrel{(3)}{\bullet}
}
\]
Most complexes are concentrated in a single homological degree,
excluding some arrows on the 1st and subsequent pages
for degrees reasons. The 2nd page is:
\[
\xymatrix{%
  \stackrel{\mathclap{(0)\oplus(1)}}{\bullet}
  & \stackrel{(1)}{\bullet} \ar@/_/[rr]_{\Dthreeone}
  & \jj{0}
  & \stackrel{(2)}{\bullet}
  & \jj{0}
}
\]
There is no differential from the 3rd page on:
\[
\xymatrix{%
  \stackrel{\mathclap{(0)\oplus(1)}}{\bullet}
  & \stackrel{(1)}{\bullet}
  & \jj{0}
  & \jj{0}
  & \jj{0}
}
\]
The arrows are
matrix differential operators,
of order $\leq p$ on page $p$.
We will find $\ker \Donezero \simeq \R^3$;
that $\Dthreetwo$ has trivial kernel;
and that $\Dfourthree$ has trivial cokernel.
We then construct the matrix differential operator $\Dthreeone$,
and show that it has trivial cokernel.

%%%%%%%%%%%%%%%%%%%%%%%%%%%%%%%%%%%%%%%%%%%%%%%%%%%%%%%%%%%%%
%%%%%%%%%%%%%%%%%%%%%%%%%%%%%%%%%%%%%%%%%%%%%%%%%%
%%%%%%%%%%%%%%%%%%%%%%%%%%%%%%%%%%%%%%%%%%%%%%%%%%%%%%%%%%%%%%%%%%%%%%%%%%%%%%%%%%%%

%--------------------------------------------------------------
\subsection{A contraction and the complex $(C,d_C)$}

Recall Assumption \ref{assp:mc}.
The complex $(\afree,d)$ lives in 4 dimensions,
namely $\afree$ is a
free $C^{\infty}(\R \times \spatialm,\R)$-module,
but there is a contraction to a complex $(C,d_C)$
that lives in 3 dimensions, with $C$ a free $C^{\infty}(\spatialm,\R)$-module.
The differentials are $\R$-linear.

Here we define $C$, give a basis
compatible with the decreasing filtration,
and define $d_C$. We do not write $d_C$
down completely because the formulas would be too long.
Luckily,
we do not need every detail about $d_C$.
%--------------------------------------------------------------
\begin{lemma}[The space $C$ and basis elements] \label{lemma:Cspace}
  Define
  \[
    C \;=\; \mathcal{U}_G/t\mathcal{U}_G
    \;\simeq \;{\afree}_{,G} / t {\afree}_{,G}
  \]
  which is a free module over
  \[
        C^{\infty}(\R \times \spatialm, \R) / t C^{\infty}(\R \times \spatialm, \R)
        \;\simeq\;
        C^{\infty}(\spatialm,\R)
  \]
  of rank $72$.
  A basis is given by the elements in Table \ref{table:EG}. We denote by
  $b^{\alpha}_i$ the $i$-th element listed for $G_{\alpha}\mathcal{E}_G$,
  times $s^{\alpha}$.
  Each occurrence of $\Der(\RR)$ is replaced by the ordered list $D_0,D_1,D_2,D_3$.
  We denote by $b^{k,\alpha}_i$ the $i$-th element in $G_{\alpha} \mathcal{E}_G^k$.
  Examples:
  \begin{align*}
    b^{000}_3 & = D_2\\
    b^{011}_3 & = s_2s_3\theta_1 D_0\\
    b^{1,000}_3 & = \theta_0\sigma_0 + \theta_3 \sigma_3\\
    b^{2,121}_6 & = s_1 s_2^2 s_3
                      (\theta_0\theta_3\sigma_1
                      + \theta_0\theta_1\sigma_3 - 2\theta_2\theta_3\sigma_{12} )
  \end{align*}
  This module comes with a grading
  $C = \bigoplus_{p_1p_2p_3}G_{p_1p_2p_3}C$, and
  the corresponding decreasing filtration is denoted $F_{\geq \alpha}C$.
\end{lemma}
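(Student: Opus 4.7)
The plan is to reduce the statement to a bookkeeping exercise on Table \ref{table:EG} combined with two structural facts from \citeone{} and \citetwo. I expect no genuine obstacle; the delicate point is purely to match conventions.

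First, I would establish the displayed isomorphism $\mathcal{U}_G/t\mathcal{U}_G \simeq {\afree}_{,G}/t{\afree}_{,G}$. This follows from the identification $\afree \simeq \mathcal{U}$ recalled in \citetwo, which respects the $G$-summand decomposition, together with the observation that the principal ideal $(t)$ is transported by this isomorphism. The companion identification of base rings $C^\infty(\R\times\spatialm,\R)/tC^\infty(\R\times\spatialm,\R) \simeq C^\infty(\spatialm,\R)$ is the standard restriction-to-$t{=}0$ map: surjectivity is by constant extension in $t$, and the kernel equals $tC^\infty(\R\times\spatialm,\R)$ by Hadamard's lemma.

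Second, I would check that the elements in the middle column of Table \ref{table:EG}, with each occurrence of $\Der(\RR)$ unfolded into the ordered tuple $D_0,D_1,D_2,D_3$, form a $C^\infty(\R\times\spatialm,\R)$-basis of $\mathcal{U}_G$. Linear independence is inherited from the explicit basis of $\mathcal{E}$ built in \citeone{} from the conformally orthonormal frame $\theta_0,\theta_1,\theta_2,\theta_3$ and the generators $\sigma_0,\sigma_i,\sigma_{ij}$. Multiplying each listed vector by the monomial $s^\alpha=s_1^{p_1}s_2^{p_2}s_3^{p_3}$ appropriate to its multi-index then produces a free generating set of $\mathcal{U}_G$ as a graded module, and base change along $C^\infty(\R\times\spatialm,\R)\surj C^\infty(\spatialm,\R)$ preserves freeness, delivering the claimed basis of $C$.

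Third, the rank count is immediate addition from the rightmost column: $9 + 1 + 1 + 1 + 11 + 11 + 11 + 7 + 7 + 7 + 6 = 72$. The four displayed examples then serve only to fix notation: $b^\alpha_i$ denotes the $i$-th listed generator of $G_\alpha\mathcal{E}_G$ scaled by $s^\alpha$, and the superscript $k$ in $b^{k,\alpha}_i$ records the wedge-degree, read off by counting $\theta$-factors in the corresponding table entry. The only mild obstacle I anticipate is notational: carefully matching the conventions of \citeone{} and \citetwo{} when unpacking $\Der(\RR)$ and when tracking the $s^\alpha$ weights. Once these are aligned, the verification is routine.
\qed
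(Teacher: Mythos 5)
The paper offers no proof of this lemma at all --- its proof reads ``Omitted'' --- so there is no argument of the authors' to compare against; they evidently regard the statement as routine bookkeeping. Your sketch is a correct reconstruction of exactly that bookkeeping: the identification $\afree\simeq\mathcal{U}$ from \citetwo, the restriction-to-$t=0$ identification of base rings, freeness inherited from the explicit frame-and-generator basis of $\mathcal{E}$ in \citeone, and the rank count $9+1+1+1+3\cdot 11+3\cdot 7+6=72$, with the examples pinning down the ordering conventions (unfolding $\Der(\RR)$ as $D_0,D_1,D_2,D_3$ and reading $k$ off the number of $\theta$-factors) --- all of which checks out against Table \ref{table:EG}.
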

\begin{proof}
  Omitted.
\qed\end{proof}
%----------------------------------------------------------------------------
\begin{lemma}[The contraction and the differential $d_C$] \label{lemma:defret}
Let $\oldw$ be the composition
\[
  \oldw\;:\; {\afree}_{,G} \hookrightarrow \afree \xrightarrow{d} \afree \twoheadrightarrow \afree/{\afree}_{,G}
\]
an $\R$-linear map. Then:
\begin{itemize}
  \item $\oldw$ has an $\R$-linear right inverse ($\oldw$ is surjective).
  \item There is a homological contraction of the complex $(\afree,d)$
onto the subcomplex $(\ker \oldw,d|_{\ker \oldw})$.
In particular, they have the same homology.
  \item The canonical map $\ker \oldw \to C$, namely the composition
    \[
        \ker \oldw
        \;\hookrightarrow\;
        {\afree}_{,G}
        \;\twoheadrightarrow\;
        {\afree}_{,G}/t{\afree}_{,G} \simeq C
    \]
    is a vector space isomorphism.
    The differential $d|_{\ker \oldw}$ induces a differential $d_C \in \End^1(C)$
    that respects the filtration,
    $d_C(F_{\geq \alpha}C) \subset F_{\geq \alpha} C$.
\end{itemize}
\end{lemma}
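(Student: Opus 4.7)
The plan is to follow the gauge-fixing contraction template of \citeone, adapted to the associated graded $\afree$. The operator $\oldw$ measures the failure of $d$ to preserve $\afree_{,G}$: once we exhibit an $\R$-linear right inverse $s$ of $\oldw$, standard homological-perturbation arguments produce the contraction, and the isomorphism $\ker \oldw \simeq C$ comes from the very ODE that produces $s$. The essential input is the explicit form of the homogeneous MC-element $\gamma$ from Lemma~\depref{mcefreehom}, which contains a $\theta_0 \otimes D_0$-type term (the lift of $\partial_t \in \Der(\RR)$) that makes the leading behavior of $\oldw$ the $t$-derivative on the $C^\infty(\R \times \spatialm,\R)$-module $\afree_{,G}$.

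To prove surjectivity of $\oldw$, I would reduce the equation $\oldw(x) = y$ to a linear ODE in $t$. Using the splitting $\afree = \afree_{,G} \oplus N$ with $N$ spanned by the $\theta_0(G_\alpha \mathcal{E}_G)$ half of Table~\ref{table:EG}, a direct expansion of $[\gamma,-]$ shows that after a canonical grade-preserving $C^\infty(\R\times\spatialm,\R)$-linear identification $\mu : \afree_{,G} \xrightarrow{\sim} N$, the operator $\mu^{-1} \circ \oldw$ has the form $\partial_t + A$, where $A$ is an $\R$-linear, algebraic (no $\partial_t$) operator built from the $g_i^0, g_i^1$, the left-invariant vector fields $D_i$, and the commutators of $\gamma$ with the basis elements. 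The Cauchy problem $\partial_t x + A x = \mu^{-1} y$, $x|_{t=0}=0$, is thus a linear first-order ODE in $t$ parametrized smoothly by $\spatialm$, and so has a unique smooth global solution. Setting $s(y) := x$ yields the required right inverse; since the ODE decouples across graded components, $s$ is grade-preserving.

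Given $s$, I would assemble the contraction. Combining $\afree_{,G} = \ker\oldw \oplus \mathrm{im}(s)$ with $\afree = \afree_{,G} \oplus N$ gives $\afree = \ker\oldw \oplus Z$ where $Z = \mathrm{im}(s) \oplus N$. The quotient complex $\afree/\ker\oldw \simeq Z$ is acyclic: for any $\alpha \in Z$ with $d\alpha \in \ker\oldw$, the element $\alpha - d(s(\pi_N(\alpha)))$ lies in $\ker\oldw$, as one checks from $d^2 = 0$ together with $\oldw \circ s = \mathrm{id}_N$. The homological perturbation lemma then produces a strong deformation retract $(\iota, r, h)$ with $r \iota = \mathrm{id}_{\ker\oldw}$ and $\iota r - \mathrm{id} = dh + hd$, $h$ of degree $-1$. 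For the isomorphism $\ker\oldw \to C$: ODE uniqueness makes evaluation at $t=0$ an injection $\ker\oldw \hookrightarrow \afree_{,G}/t\afree_{,G} \simeq C$, and ODE existence (for each initial datum) gives the surjection. Grade-preservation of $s$ and of the splittings, together with $d(F_{\geq\alpha}\afree) \subset F_{\geq\alpha}\afree$, imply $d_C(F_{\geq\alpha}C) \subset F_{\geq\alpha}C$.

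The main obstacle is the opening reduction: identifying the $\partial_t + A$ form of $\oldw$ requires unwinding the definition of $\gamma$ from \citetwo\ and systematically expanding $[\gamma, b^{k,\alpha}_i]$ for each basis element in Table~\ref{table:EG}, checking that no other $t$-derivative survives and that Assumption~\ref{assp:mc} makes the first-order ODE globally well-posed. This is a mechanical but lengthy book-keeping exercise; everything else in the proof is a formal consequence of having such a right inverse $s$.
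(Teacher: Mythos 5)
Your proposal follows essentially the same route as the paper: surjectivity of $\oldw$ is reduced to global existence for a linear ODE in $t$ (the paper phrases this as $\oldw$ being lower block triangular with diagonal blocks $D_0\mathbbm{1}+M$), $\ker\oldw$ is identified with the initial-data space $C$, and the contraction is then obtained by the abstract gauge-fixing/homological-perturbation mechanism of \citeone. The only small imprecision is your claim that the ODE ``decouples across graded components'' so that $s$ is grade-preserving: $\oldw$ is only lower block triangular, so the system is solved recursively and $s$ is filtration- rather than grading-preserving, which still suffices for $d_C(F_{\geq\alpha}C)\subset F_{\geq\alpha}C$.
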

\begin{proof}
  The proof mimics 
  the one for the contraction in {\citeone}.
  The map $\oldw$ is surjective because, relative to a basis
  compatible with the filtration as in the table, 
  $\oldw$ is lower block triangular with each square diagonal block of the form
  $D_0\mathbbm{1} + M$ with $M$ a square matrix whose entries are functions.
  Surjectivity follows from global
  existence for linear ordinary differential equations.
  Hence $\ker \oldw \to C$ is an isomorphism,
  with $C$ the vector space of initial data.
  The surjectivity yields a contraction for abstract reasons, just as in {\citeone}.
  The argument in {\citeone} differs only in that
  it uses a symmetric hyperbolic system of PDE rather than an ODE.
\qed\end{proof}
%-------------------------------------------------------------------------
\begin{example}
  Under the isomorphism $\ker \oldw \to C$, the preimage of
  $f b^{0,000}_3
  = f D_2$
  with $f \in C^{\infty}(\spatialm,\R)$ is
  $fD_2$
  with $f$ extended
  to an element of $C^{\infty}(\R \times \spatialm,\R)$
  by $D_0(f)=0$.
  In fact, $d (fD_2) \in {\afree}_{,G}$.
\end{example}
  \begin{example}
  We discuss one example in more detail.
  Calculations like this can be automated.
  We claim that the preimage under $\ker \oldw \to C$ of
  $f b^{0,000}_5
  = f\sigma_0$
  is
  \begin{equation}\label{eq:ttt}
      f\sigma_0
      +
      t\big(
      f D_0
      + \sv{2}\sv{3} D_1(f) \sigma_1
      + \sv{3}\sv{1} D_2(f) \sigma_2
      + \sv{1}\sv{2} D_3(f) \sigma_3
      \big)
    \end{equation}
    with $D_0(f)=0$ and $\sv{i} = s_i e^{t g_i^0}$; 
  the data in the exponent belongs to the homogeneous MC-element
  $\gamma \in \afree^1$ as in Assumption \ref{assp:mc},
  whose homology we are studying.
  Since \eqref{eq:ttt} is in ${\afree^0}_{,G}$
  and modulo $t$ yields $f\sigma_0$,
  we only have to check that applying $d$ yields zero in $\afree/{\afree}_{,G}$.
  Then this is the preimage by \lemmaref{defret}.
  The following brackets are in $\afree$,
  the entry in the first slot always coming from $\gamma$.
  \begin{itemize}
    \item $G_{000}\afree^1$: One calculates,
      \begin{multline*}
      [\theta_0D_0 - \textstyle\sum_i g_i^0 (\theta_i\sigma_i + \theta_0\sigma_0),
      f \sigma_0 + t f D_0]
      = f \textstyle\sum_i g_i^0 (\theta_i \sigma_i + \theta_0 \sigma_0)
      \;\in\; G_{000} {\afree^1}_{,G}
    \end{multline*}
  \item $G_{011}\afree^1$: One calculates,
    \begin{multline*}
      [\theta_0D_0 - \textstyle\sum_i g_i^0 (\theta_i\sigma_i + \theta_0\sigma_0),
      t \sv{2}\sv{3} D_1(f) \sigma_1]
      +
      [\sv{2}\sv{3}\theta_1D_1,f \sigma_0 + t f D_0]\\
      =
      \sv{2}\sv{3}
      \big(
      -f(1+t(g_2^0+g_3^0))\theta_1D_1
      + t D_1(f) (g_2^0 \theta_2 \sigma_{12} - g_3^0 \theta_3 \sigma_{31})\\
      \rule{20mm}{0pt}
      + D_1(f) (1+t(g_1^0+g_2^0+g_3^0)) (\theta_0\sigma_1 + \theta_1\sigma_0)
      \big)
      \;\in\; G_{011} {\afree^1}_{,G}
    \end{multline*}
  \end{itemize}
  Similar for $G_{101}$, $G_{110}$.
  Since $\afree^1/{\afree^1}_{,G}$ only has these four pieces, we are done.
\end{example}
%%%%%%%%%%%%%%%%%%%%%%%%%%%%%%%%%%%%%%%%%%%%%%%%%%%%%%%%%%%%%%%%%%%%%%%%%%%
\begin{definition}[Other gradings of $C$] \label{def:Cgradings}
  Recall $C = \bigoplus_{p_1p_2p_3} G_{p_1p_2p_3} C$.
  Define the coarser $G_{p_2p_3}C = \bigoplus_{p_1} G_{p_1p_2p_3} C$,
  explicitly
  \[
      \begin{pmatrix}
        G_{00}C & G_{01}C & G_{02}C\\
        G_{10}C & G_{11}C & G_{12}C\\
        G_{20}C & G_{21}C & G_{22}C
      \end{pmatrix}
      =
            \begin{pmatrix}
              G_{000}C\oplus G_{200}C & G_{101}C & G_{002}C\\
              G_{110}C & G_{011}C \oplus G_{211}C & G_{112}C\\
        G_{020}C & G_{121}C & G_{222}C
      \end{pmatrix}
  \]
  The ordering of the direct sums gives ordered bases,
  so the basis for $G_{00}C$ is the concatenation
  of the ordered bases for $G_{000}C$ and $G_{200}C$.
  Same for $G_{00} C^k$ for every $k$.
  We also define $V_p = \bigoplus_{p_2+p_3 = p} G_{p_2p_3}C$,
  explicitly
  \begin{equation}\label{eq:vspaces}
  \begin{aligned}
    V_0 & = G_{00}C\\
    V_1 & = G_{10}C \oplus G_{01}C\\
    V_2 & = G_{20}C \oplus G_{11}C \oplus G_{02}C\\
    V_3 & = G_{21}C \oplus G_{12}C\\
    V_4 & = G_{22}C
  \end{aligned}
\end{equation}
  and the corresponding ordered bases.
\end{definition}
Relative to $C = V_0 \oplus \ldots \oplus V_4$
the differential has the form block form
\[
d_C \;=\;
\begin{pmatrix}
  \ast & 0 & 0 & 0 & 0 \\
  \ast & \ast & 0 & 0 & 0 \\
  \ast & \ast & \ast & 0 & 0 \\
  \ast & \ast & \ast & \ast & 0 \\
  \ast & \ast & \ast & \ast & \ast
\end{pmatrix}
\]
based on which we will construct a spectral sequence.
The next lemma gives a first qualitative description of $d_C$,
later we need more details.
\begin{lemma}[$d_C$ is a first order matrix differential operator]
  \label{lemma:fomdo}
  There is a $72 \times 72$ lower block triangular matrix $B$
  and three strictly lower block triangular matrices $A_1,A_2,A_3$,
  all with constant entries,
  such that
  $ d_C = A_1 D_1
          +
          A_2 D_2
          +
          A_3 D_3
          +
          B$.
 \end{lemma}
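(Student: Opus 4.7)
The plan is to unfold the recipe for $d_C$ from \lemmaref{defret}: for $c \in C$, lift to $\tilde c \in \ker \oldw \subset {\afree}_{,G}$, compute $[\gamma,\tilde c]$, and reduce modulo $t$. First I would decompose $\gamma = \sum_\alpha \gamma_\alpha$ by filtration grade. From the explicit homogeneous MC-element of \depref{mcefreehom}, the key structural facts are: (i) $\gamma_{000}$ contains the time derivative $\theta_0 D_0$ together with spatially-constant multiplication terms but no spatial derivatives; (ii) the spatial derivatives $D_1, D_2, D_3$ appear only in components $\gamma_\beta$ with $\beta \in \{011, 101, 110\}$, entering through the $\theta_1\Der(\RR)$, $\theta_2\Der(\RR)$, $\theta_3\Der(\RR)$ summands of Table \ref{table:EG}; (iii) all coefficients are spatially constant, with $t$-dependence only through the $g_i^0$-exponentials.

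For a basis element $c = f b^\alpha_i$ with $f \in C^\infty(\spatialm,\R)$, the lift has the form $\tilde c = c + \sum_{k \geq 1} t^k r_k$, where the $r_k$ are uniquely determined by solving the linear ODE $\oldw(\tilde c) = 0$ order by order in $t$; they are differential polynomials in $f$ with spatially-constant coefficients. The crucial observation is that only $r_0 = c$ and $r_1$ contribute to $d_C(c) \equiv [\gamma, \tilde c] \pmod t$: for $k \geq 2$, $[\gamma, t^k r_k]$ produces either $t^k[\gamma, r_k]$ or, via $\theta_0 D_0 \in \gamma_{000}$, a term $\theta_0 \cdot k t^{k-1} r_k$, both vanishing mod $t$. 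One then computes $[\gamma, c] + [\gamma, t r_1] \pmod t$ term by term: the $G_\alpha$-part (the diagonal block) comes from the multiplication portion of $[\gamma_{000}, c]$ together with $[\theta_0 D_0, t r_1]\big|_{t=0} = \theta_0\,r_1\big|_{t=0}$, and is linear in $f$ with constant coefficients; contributions landing in $G_{\alpha + \beta}$ with $\beta > 000$ come from $[\gamma_\beta, c]$, producing $D_j(f)$ terms when $\gamma_\beta$ contains $D_j$ and multiplicative terms otherwise.

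This yields the claimed form $d_C = A_1 D_1 + A_2 D_2 + A_3 D_3 + B$: first-order in spatial derivatives, with no $D_0$ term since $D_0 f = 0$ by construction of the lift. The matrix $B$ is lower block triangular because $d_C$ respects the filtration by \lemmaref{defret}. The matrices $A_1, A_2, A_3$ are strictly lower block triangular because the $D_j$-contributions come solely from brackets with $\gamma_\beta$, $\beta \in \{011, 101, 110\}$, which strictly raise the filtration index $p_2 + p_3$. Constancy of the entries follows from the spatial homogeneity of $\gamma$ and from evaluating the $g_i^0$-exponentials at $t=0$ during the mod-$t$ reduction. The main obstacle will be verifying that no higher-order spatial derivatives of $f$ leak into $d_C$: the only plausible source is $[\gamma_\beta, t r_1]$ with $\beta > 000$, but such terms are of order $t$ and vanish upon reduction, since $\gamma_\beta$ for $\beta > 000$ contains no $\theta_0 D_0$ that could strip off the $t$. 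The remainder is routine bookkeeping with the gLa structure on $\E$.
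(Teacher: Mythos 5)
Your proposal is correct and follows essentially the same route as the paper, whose entire recorded proof consists of the single observation you also make central: $D_1,D_2,D_3$ enter the Maurer--Cartan element only with $s_2s_3$, $s_3s_1$, $s_1s_2$, i.e.\ in grades $011,101,110$, so they strictly raise $p_2+p_3$ and the $A_i$ have vanishing diagonal blocks. The rest of your computation (lifting to $\ker\oldw$, reducing mod $t$, noting that only $r_0,r_1$ survive) is the routine bookkeeping the paper omits; the one point you leave slightly implicit is that $r_1|_{t=0}$ itself carries no second spatial derivatives of $f$, which holds because the intermediate off-diagonal components of the lift, whose sources carry the spatial derivatives, all vanish at $t=0$.
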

\begin{proof}
  The diagonal blocks of $A_1,A_2,A_3$ are zero
  since the derivatives $D_1,D_2,D_3$ appear in the Maurer-Cartan element
  with $s_2s_3$, $s_3s_1$, $s_1s_2$.
\qed\end{proof}
\begin{lemma}[Blocks of $d_C$] \label{lemma:blocksdc}
  We denote by
  \begin{align*}
      \ddc{k,p_1p_2p_3 \to q_1q_2q_3}
      \;&\in\;
      \Hom_{\R}(G_{p_1p_2p_3}C^k,G_{q_1q_2q_3}C^{k+1})\\
      \ddc{k,p_2p_3 \to q_2q_3}
      \;&\in\;
      \Hom_{\R}(G_{p_2p_3}C^k,G_{q_2q_3}C^{k+1})\\
      \ddc{k,p \to q}
      \;&\in\;
      \Hom_{\R}(V_p^k,V_q^{k+1})
    \end{align*}
  the blocks of $d_C$ relative to the different gradings of $C$.
  If the index $k$ is omitted, then a direct sum over $k$ is understood.
  We have:
  \begin{itemize}
    \item 
      If $p_i > q_i$ for at least one $i$,
      then the block $\ddc{p_1p_2p_3\to q_1q_2q_3}$ is zero.
    \item Each diagonal block $\ddc{p_1p_2p_3\to p_1p_2p_3}$ 
      is $C^{\infty}(\spatialm,\R)$-linear
      (entries in $C^{\infty}(\spatialm,\R)$).
  \end{itemize}
  Analogous statements hold for
      $\ddc{p_2p_3\to q_2q_3}$
      and $\ddc{p\to q}$. 
\end{lemma}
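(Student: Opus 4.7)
\textit{Proof plan.} The plan is to reduce both items to two structural facts already established for $(C,d_C)$: that $d_C$ preserves the decreasing filtration (\lemmaref{defret}), and that $d_C = A_1 D_1 + A_2 D_2 + A_3 D_3 + B$ with $A_1,A_2,A_3$ strictly lower block triangular (\lemmaref{fomdo}). Both bullets for the fine grading $p_1p_2p_3$ will drop out immediately, and the analogs for the coarser gradings $p_2p_3$ and $p=p_2+p_3$ will follow because each coarse block is a direct sum of fine ones.

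For the first bullet, I observe that $G_{p_1p_2p_3}C \subset F_{\geq p_1p_2p_3}C$, so \lemmaref{defret} gives
\[
  d_C(G_{p_1p_2p_3}C) \;\subset\; \bigoplus_{(q_1,q_2,q_3) \geq (p_1,p_2,p_3)} G_{q_1q_2q_3}C,
\]
with componentwise ordering. Projecting onto a component with some $q_i < p_i$ yields zero, which is the vanishing claim. For the second bullet I invoke \lemmaref{fomdo}: since $D_1,D_2,D_3$ enter $\gamma$ only through prefactors $s_2s_3$, $s_3s_1$, $s_1s_2$, each operator $A_i D_i$ strictly raises the $(p_1,p_2,p_3)$-grading. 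Consequently the diagonal block $\ddc{p_1p_2p_3\to p_1p_2p_3}$ receives no contribution from the derivative terms and is simply a block of $B$, whose entries are functions in $C^{\infty}(\spatialm,\R)$.

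The coarse analogs should follow by unpacking the direct-sum definitions of \defref{Cgradings}. The block $\ddc{p_2p_3\to q_2q_3}$ decomposes into fine blocks $\ddc{p_1p_2p_3\to q_1q_2q_3}$ over all $p_1,q_1$; if $p_2>q_2$ or $p_3>q_3$ then every fine summand vanishes by what was just shown, while for $\ddc{p\to q}$ with $p_2+p_3>q_2+q_3$ one cannot have $p_2\leq q_2$ and $p_3\leq q_3$ simultaneously, so again every fine summand vanishes. For the coarse diagonal blocks $\ddc{p_2p_3\to p_2p_3}$ and $\ddc{p\to p}$, any derivative term $A_iD_i$ would raise $p_2+p_3$ by $2$, hence contributes nothing, and only (a sub-block of) $B$ survives, yielding the $C^{\infty}(\spatialm,\R)$-linearity.

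I do not anticipate a genuine obstacle: the content is essentially the conjunction of two bookkeeping observations on top of the preceding structural lemmas. The one place to tread carefully is aligning the basis orderings implicit in \defref{Cgradings} with the block-triangular conventions used in \lemmaref{fomdo}, so that \emph{lower block triangular} there really refers to the same filtration ordering being invoked here.
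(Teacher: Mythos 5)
Your argument matches the paper's (very terse) proof: the first bullet is just the filtration preservation of \lemmaref{defret} read off blockwise, and the second bullet is that the derivative terms $A_iD_i$ of \lemmaref{fomdo} lie strictly below the diagonal, so every diagonal block sits inside $B$ and is $C^{\infty}(\spatialm,\R)$-linear. One small slip: $D_2$ and $D_3$ carry prefactors $s_3s_1$ and $s_1s_2$, so they raise $p_2+p_3$ by $1$, not by $2$ --- but raising it by at least $1$ (and raising at least one of $p_2,p_3$) is all that is needed for them to miss the diagonal in each of the three gradings, so your conclusion stands.
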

\begin{proof}
  The linearity over $C^{\infty}(\spatialm,R)$,
  is because $D_1,D_2,D_3$ appear only on the subdiagonals, see \lemmaref{fomdo}.
\qed\end{proof}
%------------------------------------------------------------------------
\subsection{Spectral sequence: the 0th page}\label{sec:0thpage}
Recall Assumption \ref{assp:mc}.
The 0th-page is
\[
\xymatrix{%
  V_0 \ar@(dl,dr)[]_{D_{00}} &
  V_1 \ar@(dl,dr)[]_{D_{11}} &
  V_2 \ar@(dl,dr)[]_{D_{22}} &
  V_3 \ar@(dl,dr)[]_{D_{33}} &
  V_4 \ar@(dl,dr)[]_{D_{44}}
}
\]
with differentials $D_{pp} = \ddc{p \to p}$ that are linear over
$C^{\infty}(\spatialm,\R)$.
The differentials are also diagonal relative to the decompositions
\eqref{eq:vspaces}, therefore:
\begin{equation}\label{eq:hd01234}
\begin{aligned}
  H(D_{00}) & = H(\ddc{00 \to 00})\\
  H(D_{11}) & = H(\ddc{10 \to 10}) \oplus H(\ddc{01 \to 01})\\
  H(D_{22}) & = H(\ddc{20 \to 20}) \oplus H(\ddc{11 \to 11}) \oplus H(\ddc{02 \to 02})\\
  H(D_{33}) & = H(\ddc{21 \to 21}) \oplus H(\ddc{12 \to 12})\\
  H(D_{44}) & = H(\ddc{22 \to 22})
\end{aligned}
\end{equation}
For example, $D_{11}$ is diagonal because the off-diagonal
$\ddc{10 \to 01}$ and $\ddc{01 \to 10}$
vanish by \lemmaref{blocksdc}.
We now give these differentials in matrix form,
all calculated using the computer.
\begin{itemize}
  \item Consider the block
  $\ddc{00\to 00}$.
  Note that
  \[
    \rank_{C^{\infty}(\spatialm,\R)}
  V_0^k = 5,4,1,0
\]
with ordered bases
  $b_1^{0,000}$, $b_2^{0,000}$, $b_3^{0,000}$, $b_4^{0,000}$,
  $b_5^{0,000}$ respectively
  $b_1^{1,000}$, $b_2^{1,000}$, $b_3^{1,000}$, 
  $b_1^{1,200}$ respectively $b_1^{2,000}$ relative to which
  \[
      \ddc{0,00\to 00}
      =
      \begingroup
      \setlength\arraycolsep{4pt}
      \begin{pmatrix}
        0 & 0 & 0 & 0 & g_1^0\\
        0 & 0 & 0 & 0 & g_2^0\\
        0 & 0 & 0 & 0 & g_3^0\\
          - 2 g_1^0 g_1^1
        & 0
        & 0
        & 0
        & -  g_1^1
      \end{pmatrix}
      \endgroup
  \]
  and
  \[
    \ddc{1,00\to 00}
    \;=\;
    \begin{pmatrix}
      -\tfrac{1}{3}(g_2^0+g_3^0)
      &
      -\tfrac{1}{3}(g_3^0+g_1^0)
      &
      -\tfrac{1}{3}(g_1^0+g_2^0)
      &
      0
    \end{pmatrix}
  \]
  and $\ddc{2,00\to 00} = 0$.
  By construction, $\ddc{00\to 00}$ must itself be a differential.
  To check this explicitly, use the algebraic constraint
  $g_2^0g_3^0 + g_3^0g_1^0 + g_1^0g_2^0 = 0$, see {\citetwo}.
%--------------------------------------------------------------
\item  Consider
  $\ddc{10 \to 10}$
  and
  $\ddc{01 \to 01}$.
  Then
  $ \rank_{C^\infty} G_{10}C^k =
  \rank_{C^\infty} G_{01}C^k = 2,8,1,0$ and
  \[
             \ddc{0,10\to 10}
      =
      \begin{pmatrix}
          -1 & 0 \\
          0 & 0 \\
          0 & 0 \\
          0 & 0 \\
          g_1^0 + g_2^0 + g_3^0 & 0 \\
          0 & g_2^0 - g_1^0 \\
          -g_2^0 & 0 \\
          g_1^0 & 0
      \end{pmatrix} 
    \qquad
      \ddc{0,01\to 01}
      =
      \begin{pmatrix}
          -1 & 0 \\
          0 & 0 \\
          0 & 0 \\
          0 & 0 \\
          g_1^0 + g_2^0 + g_3^0 & 0 \\
          0 & g_1^0 - g_3^0 \\
          -g_1^0 & 0 \\
          g_3^0 & 0
      \end{pmatrix}
  \]
  and
  \begin{align*}
        \ddc{1,10 \to 10}
    & =
    \begin{pmatrix}
      0 & 0 & 0 & 0 & -g_1^0-g_2^0 & 0 & g_3^0-g_2^0 & g_1^0-g_3^0
    \end{pmatrix}\\
    \ddc{1,01 \to 01}
    & =
    \begin{pmatrix}
      0 & 0 & 0 & 0 & -g_1^0-g_3^0 & 0 & g_2^0-g_1^0 & g_3^0-g_2^0
    \end{pmatrix}
  \end{align*}
  and $\ddc{2,10\to 10} = 0$ and $\ddc{2,01\to 01} = 0$.
%--------------------------------------------------------------
\item
  The block $\ddc{20 \to 20}$ is zero because $\rank_{C^\infty}
  G_{20} C^k = 0,1,0,0$.
  So is $\ddc{02 \to 02}$.
%--------------------------------------------------------------
\item The block
  $\ddc{11 \to 11}$ is given by
  \[
      \ddc{0,11\to 11}
      \;=\;
      \begin{pmatrix}
          -1 & 0 \\
          0 & 0 \\
          0 & 0 \\
          0 & 0 \\
          g_1^0 + g_2^0 + g_3^0 & 0 \\
          0 & g_3^0 - g_2^0 \\
          -g_3^0 & 0 \\
          g_2^0 & 0\\
           g_1^1 & 0
      \end{pmatrix}
  \]
  and
  %-----------------
    \begin{multline*}
    \ddc{1,11 \to 11}
    \;=\;\\
    \left(
    \begin{array}{c c c c c c c c c}
      0 & 0 & 0 & 0 & -g_2^0-g_3^0 & 0 & g_1^0-g_3^0 & g_2^0-g_1^0 & 0\\
      2  g_1^1 & 0 & 0 & 0 & 0 & 0 & 0 & 0 & 2\\
      0 & 2 g_1^1 & 0 & 0 & 0 & 0 & 0 & 0 & 0 \\
      0 & 0 & 2 g_1^1 & 0 & 0 & 0 & 0 & 0 & 0 \\
      0 & 0 & 0 & 2  g_1^1 & 0 & 0 & 0 & 0 & 0 \\
      0 & 0 & 0 & 0 & - g_1^1 & 0 & 0 & 0 & g_1^0 + g_2^0 + g_3^0 \\
      -2 g_1^0g_1^1 & 0 & 0 & 0 & - g_1^1 & 0 & - g_1^1 &  g_1^1 & -g_1^0
    \end{array}
    \right)
  \end{multline*}
  and $\ddc{2,11\to 11} = 0$.
%--------------------------------------------------------------
\item The blocks
  $\ddc{21 \to 21}$ and
  $\ddc{12 \to 12}$ are given by
  \[
      \ddc{1,21\to 21}
      \;=\;
      \begin{pmatrix}
          2 \\
          0  \\
          0  \\
          0  \\
          g_1^0 + g_2^0 + g_3^0 \\
          -g_2^0 
      \end{pmatrix}
      \qquad
      \ddc{1,12\to 12}
      \;=\;
      \begin{pmatrix}
          2 \\
          0  \\
          0  \\
          0  \\
          g_1^0 + g_2^0 + g_3^0 \\
          -g_3^0 
      \end{pmatrix}
    \]
    whereas
    $\ddc{0,21 \to 21} = \ddc{2,21\to 21} = 0$
    and
    $\ddc{0,12 \to 12} = \ddc{2,12\to 12} = 0$.
%--------------------------------------------------------------
\item The block $\ddc{22 \to 22}$ is given by
  \[
      \ddc{2,22\to 22}
      \;=\;
      \begin{pmatrix}
          6 \\
          0 \\
          0 \\
          0 \\
          -2(g_1^0 + g_2^0 + g_3^0)
      \end{pmatrix}
    \]
    whereas $\ddc{0,22 \to 22} = \ddc{1,22\to 22} = 0$.
\end{itemize}
%%%%%%%%%%%%%%%%%%%%%%%%%%%%%%%%%%%%%%%%%%%%%%%%%%%%%%%%%%%%%%%%%%%%
\newcommand{\sx}[1]{$\mathrlap{{}_{(#1)}}$}
\begin{lemma}[Homology of the 0th page] \label{lemma:hom0page}
  The $H(\ddc{p_1p_2\to p_1p_2})$
  are free $C^{\infty}(\spatialm,\R)$-modules with ranks:
  \begin{center}
  \begin{tabular}{c | c c c c}
    & $k=0$ & $k=1$ & $k=2$ & $k=3$\\
    \hline
    $H^k(\ddc{00 \to 00})$ & 3 \sx{5} & 1 \sx{4} & \sx{1} & \\
   \hline
   $H^k(\ddc{10 \to 10})$, $H^k(\ddc{01 \to 01})$ & \sx{2} & 5 \sx{8} & \sx{1} & \\
   \hline
   $H^k(\ddc{20 \to 20})$, $H^k(\ddc{02 \to 02})$ & & 1 \sx{1} & & \\
   $H^k(\ddc{11 \to 11})$ & \sx{2} & \sx{9} & \sx{7} & \\
   \hline
   $H^k(\ddc{21 \to 21})$, $H^k(\ddc{12 \to 12})$ & & \sx{1} & 5 \sx{6} & \\
   \hline
   $H^k(\ddc{22 \to 22})$ & & & \sx{1} & 4 \sx{5}
  \end{tabular}
\end{center}
  Accordingly, the $H(D_{pp})$
  are free $C^{\infty}(\spatialm,\R)$-modules with ranks:
  \begin{center}
  \begin{tabular}{c | c c c c}
    & $k=0$ & $k=1$ & $k=2$ & $k=3$\\
    \hline
    $H^k(D_{00})$ & 3 \sx{5} & 1 \sx{4} & \sx{1} & \\
    $H^k(D_{11})$ & \sx{4} & 10 \sx{16} & \sx{2} & \\
    $H^k(D_{22})$ & \sx{2} & 2 \sx{11} & \sx{7}  & \\
    $H^k(D_{33})$ &  & \sx{2} & 10 \sx{12} & \\
    $H^k(D_{44})$ &  &  & \sx{1} & 4 \sx{5}
  \end{tabular}
\end{center}
The small numbers in brackets are the ranks before taking homology.
\end{lemma}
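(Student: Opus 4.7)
The plan is to compute $H(\ddc{p_2p_3 \to p_2p_3})$ for each of the six diagonal blocks directly from the explicit matrices listed above, and then assemble $H(D_{pp})$ via the block decomposition \eqref{eq:hd01234}.

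The freeness over $C^{\infty}(\spatialm,\R)$ is essentially automatic. Every entry in the displayed matrices is a polynomial in the scalars $g_i^0, g_i^1$ that appear in the homogeneous MC-element $\gamma$, so each $\ddc{k,p_2p_3 \to p_2p_3}$ is obtained from a matrix over $\R$ by tensoring with $C^{\infty}(\spatialm,\R)$. Its kernel and image are therefore free $C^{\infty}(\spatialm,\R)$-modules whose ranks coincide with the ranks of the underlying $\R$-linear maps; freeness of the homology follows.

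Before computing ranks I would verify that each $\ddc{p_2p_3 \to p_2p_3}$ actually squares to zero. By construction this is forced by $[\gamma,\gamma]=0$ together with \lemmaref{blocksdc} and \lemmaref{fomdo}, but directly on the matrices it reduces to a handful of polynomial identities in $g_i^0, g_i^1$. The main nontrivial input is the algebraic constraint $g_2^0 g_3^0 + g_3^0 g_1^0 + g_1^0 g_2^0 = 0$ already noted below the first matrix, plus any analogous identity involving $g_i^1$ that surfaces in the $\ddc{11\to 11}$ block.

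The rank computations themselves are routine. Under Assumption \ref{assp:mc} the $g_i^0$ are nonzero and pairwise distinct, each sum $g_i^0 + g_j^0$ is strictly negative, and the $g_i^1$ are nonzero, so every pivot one wants to exploit is nonvanishing. For example, in $\ddc{0,00 \to 00}$ the first and fifth columns are linearly independent (using $g_1^0, g_1^1 \neq 0$), giving rank $2$; the row $\ddc{1,00 \to 00}$ is nonzero (using $g_i^0 + g_j^0 < 0$), giving rank $1$; combined with $\ddc{2,00\to 00} = 0$ this yields $H^0 = 3, H^1 = 1, H^2 = 0$, matching the table. The remaining blocks, the zero blocks $\ddc{20\to 20}, \ddc{02 \to 02}$, and the larger block $\ddc{11 \to 11}$ are treated in the same spirit, with distinctness of the $g_i^0$ driving the $\ddc{0}$-blocks and nonvanishing of $g_1^1$ driving the $\ddc{11 \to 11}$ block. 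Summing block homologies via \eqref{eq:hd01234} then yields the second table.

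The main obstacle is pure bookkeeping: writing out the $d_C^2=0$ identities to confirm the above matrices do form complexes, and doing the rank analysis on the densest block $\ddc{1,11 \to 11}$, a $9\times 9$ matrix with several entries depending on both $g_i^0$ and $g_1^1$. The remaining matrices are sparse enough that their ranks are visible by inspection once the pivot assumptions of Assumption \ref{assp:mc} are used; in practice the whole verification is most cleanly delegated to a computer algebra check, which is consistent with the paper's remark that the matrices themselves were so produced.
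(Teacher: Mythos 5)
Your proposal is correct and matches the paper's approach: the lemma is proved by direct rank computations on the explicit constant matrices $\ddc{k,p_2p_3\to p_2p_3}$, with the nonvanishing/distinctness hypotheses of Assumption \ref{assp:mc} guaranteeing the pivots, and the second table assembled via \eqref{eq:hd01234}. The only cosmetic difference is that the paper establishes freeness by exhibiting explicit bases of representatives (Lemma \ref{lemma:hbas}) rather than by your (equally valid) observation that the blocks are base changes of $\R$-linear maps.
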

\begin{proof}
  Use Assumption \ref{assp:mc}.
  The claim follows from the $\ddc{p_1p_2 \to p_1p_2}$.
  For freeness it suffices to exhibit bases,
  which are in \lemmaref{hbas}.
\qed\end{proof}
%%%%%%%%%%%%%%%%%%%%%%%%%%%%%%%%%%%%%%%%%%%%%%%%%%%%%%%%%%%%%%%%%%%
\newcommand{\sppx}[1]{\rule{#1mm}{0pt}}
\begin{lemma}[Ordered bases of representatives]\label{lemma:hbas}
  As always, Assumption \ref{assp:mc} is in force.
  The row vectors in this lemma are relative to the ordered bases
  for the $G_{p_2p_3}C^k$ in \lemmaref{Cspace} and \defref{Cgradings}.
  In each case, an ordered basis for the module to the left of the colon
  is given by
  the vectors to the right of the colon:
\begin{itemize}
  \item $H^0(\ddc{00 \to 00})$:
    $(0,1,0,0,0)$, $(0,0,1,0,0)$, $(0,0,0,1,0)$.
 \item $H^1(\ddc{00 \to 00})$:
    $(g_2^0-g_3^0, g_3^0-g_1^0, g_1^0-g_2^0, 0)$.
  \item
    $H^1(\ddc{10 \to 10})$:
    $(1,0,0,0,0,0,0,0)$,
$(0,1,0,0,0,0,0,0)$,
$(0,0,1,0,0,0,0,0)$,\\
\sppx{20}$(0,0,0,1,0,0,0,0)$,
$(0,0,0,0,0,0,g_1^0-g_3^0,g_2^0-g_3^0)$.
  \item
    $H^1(\ddc{01 \to 01})$:
$(1,0,0,0,0,0,0,0)$,
$(0,1,0,0,0,0,0,0)$,
$(0,0,1,0,0,0,0,0)$,\\
\sppx{20}$(0,0,0,1,0,0,0,0)$,
$(0,0,0,0,0,0,g_2^0-g_3^0,g_2^0-g_1^0)$.
\item 
    $H^1(\ddc{20 \to 20})$
    and
    $H^1(\ddc{02 \to 02})$: $(1)$.
  \item 
    $H^2(\ddc{21 \to 21})$
    and
    $H^2(\ddc{12 \to 12})$:
$(0,1,0,0,0,0)$,
$(0,0,1,0,0,0)$,
$(0,0,0,1,0,0)$,\\
\sppx{45}$(0,0,0,0,1,-1)$,
$(0,0,0,0,0,1)$.
\item
    $H^3(\ddc{22 \to 22})$:
$(0,1,0,0,0)$,
$(0,0,1,0,0)$,
$(0,0,0,1,0)$,
$(0,0,0,0,-\tfrac{2}{3})$.
\end{itemize}
\end{lemma}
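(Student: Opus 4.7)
The plan is a direct verification using linear algebra. Because each differential $\ddc{k, p_2p_3 \to p_2p_3}$ appearing in the lemma has $\R$-valued entries (polynomials in the $g_i^0, g_i^1$) and is $C^{\infty}(\spatialm,\R)$-linear by \lemmaref{blocksdc}, the homology $H^k(\ddc{p_2p_3 \to p_2p_3})$ is a free $C^{\infty}(\spatialm,\R)$-module whose rank is determined by the constant-coefficient matrix alone; these ranks are already recorded in \lemmaref{hom0page}. Thus it suffices to show, for each bullet point, that the listed row vectors (i) are cocycles of the indicated homological degree and (ii) their classes are $\R$-linearly independent in $H^k(\ddc{p_2p_3 \to p_2p_3})$, with the number of representatives matching the rank from \lemmaref{hom0page}.

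For part (i) one substitutes each listed vector into the appropriate $\ddc{k+1, p_2p_3 \to p_2p_3}$ matrix from \secref{0thpage} and checks that the product vanishes. The only nontrivial identity invoked is the algebraic constraint $g_1^0 g_2^0 + g_2^0 g_3^0 + g_3^0 g_1^0 = 0$ from \citetwo, which is what makes, for instance, $(g_2^0-g_3^0, g_3^0-g_1^0, g_1^0-g_2^0, 0)$ a cocycle for $\ddc{1, 00 \to 00}$. In the $H^0$ cases, the verification is simply that the indicated coordinates lie in the kernel of a matrix whose nonzero columns are easily identified.

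For part (ii) one reads off the column span of the explicit $\ddc{k, p_2p_3 \to p_2p_3}$ matrix and compares it with the listed representatives, using the nondegeneracy inputs from Assumption \ref{assp:mc}: the $g_i^0$ are nonzero, pairwise distinct and have nonvanishing pairwise sums, and the $g_i^1$ are nonzero. For example, in $H^1(\ddc{00 \to 00})$ the image of $\ddc{0, 00 \to 00}$ is spanned by $(0,0,0,-2 g_1^0 g_1^1)$ and $(g_1^0, g_2^0, g_3^0, -g_1^1)$; any element of this image with vanishing last coordinate is forced to be proportional to $(g_1^0, g_2^0, g_3^0, 0)$, which is not a scalar multiple of $(g_2^0-g_3^0, g_3^0-g_1^0, g_1^0-g_2^0)$ precisely because the $g_i^0$ are pairwise distinct. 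Analogous one-line arguments cover each of the remaining blocks, with the relevant nonvanishing scalars (differences or sums of $g_i^0$, or $g_i^1$) supplied by the assumption.

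The main obstacle here is not conceptual but purely bookkeeping: there are eight cases and several of the matrices involved have size up to $9 \times 9$. Each case is entirely routine and can be done by hand or, as noted in \secref{0thpage}, automated on a computer. No further ideas beyond the cocycle/coboundary calculation and the rank count of \lemmaref{hom0page} are required.
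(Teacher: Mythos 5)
Your proposal is correct and is essentially the paper's own argument: the paper's proof is literally ``By inspection of the matrices for the $\ddc{p_2p_3\to p_2p_3}$,'' and you have simply spelled out what that inspection consists of (cocycle check, independence modulo the column span, rank count), using exactly the nonvanishing hypotheses of Assumption \ref{assp:mc} that the paper relies on. One small remark: quoting the ranks from \lemmaref{hom0page} is formally circular, since that lemma's freeness claim in turn cites \lemmaref{hbas} for bases, but your observation that the matrices have constant entries (so kernels, images and homology are base-changed from $\R$ and the ranks are determined by linear algebra over $\R$) already makes your argument self-contained.
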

\begin{proof}
  By inspection of the matrices for the $\ddc{p_2p_3 \to p_2p_3}$.
\qed\end{proof}

%-----------------------------------------------------------------
\begin{lemma}[Choice of complements of the kernel] \label{lemma:cbas}
  In $G_{p_2p_3}C^k$ a complement of $\ker \ddc{p_2p_3 \to p_2p_3}$ 
  is spanned by all elements of the form:
  \begin{align*}
    \textup{$G_{00}C^0$:} & \qquad (\ast,0,0,0,\ast)\\
    \textup{$G_{00}C^1$:} & \qquad (1,1,1,0)\\
    \textup{$G_{10}C^1$:} & \qquad (0,0,0,0,\ast,0,0,0)\\
    \textup{$G_{01}C^1$:} & \qquad (0,0,0,0,\ast,0,0,0)\\
    \textup{$G_{11}C^1$:} & \qquad (\ast,\ast,\ast,\ast,\ast,0,\ast,\ast,0)
 \end{align*}
  For all other $G_{p_2p_3}C^k$ the kernel is zero or all,
  so there is a unique complement.
\end{lemma}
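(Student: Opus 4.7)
The strategy is a direct case-by-case verification using the matrix representations of $\ddc{p_2p_3 \to p_2p_3}$ spelled out in \secref{0thpage} together with the rank information in \lemmaref{hom0page}. First I would dispatch the non-listed cases: for every $(p_2p_3,k)$ not appearing in the statement, comparing the rank of $G_{p_2p_3}C^k$ recorded in the small subscripts of \lemmaref{hom0page} with the rank of $\ker \ddc{p_2p_3 \to p_2p_3}$ (which equals the image rank from the next block plus the $H^k$ rank) shows the kernel to be either $0$ or all of $G_{p_2p_3}C^k$; in either case the complement is unique and the statement is vacuous.

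For each of the five listed cases the plan has three steps:
(i) read $\ker \ddc{p_2p_3 \to p_2p_3}$ off the corresponding matrix by inspection of its columns, using from Assumption \ref{assp:mc} that $g_1^0,g_2^0,g_3^0$ are nonzero and pairwise distinct, that $g_1^1$ is nonzero, and the constraint $g_1^0 g_2^0 + g_2^0 g_3^0 + g_3^0 g_1^0 = 0$ recalled in \secref{0thpage};
(ii) compute the dimension of the proposed complement by counting free parameters (the $\ast$'s) and match it against $\dim G_{p_2p_3}C^k - \dim \ker \ddc{p_2p_3 \to p_2p_3}$ taken from \lemmaref{hom0page};
(iii) verify that the span of vectors of the proposed form meets the kernel trivially, which is immediate once a basis of the kernel has been written down in step (i).

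For instance in $G_{00}C^0$ the four rows of $\ddc{0,00\to 00}$ reveal that a vector $(a,b,c,d,e)$ lies in the kernel precisely when $g_i^0 e = 0$ for $i=1,2,3$ and $2 g_1^0 g_1^1 a + g_1^1 e = 0$, which by Assumption \ref{assp:mc} forces $a = e = 0$; thus the kernel consists of all vectors $(0,b,c,d,0)$, a complement is given by the form $(\ast,0,0,0,\ast)$, and the dimensions $3 + 2 = 5$ agree. The remaining four cases proceed analogously: in $G_{00}C^1$ one uses $g_i^0 \ne g_j^0$ together with the quadratic constraint to see that $(1,1,1,0)$ avoids the image of $\ddc{0,00\to 00}$; in $G_{10}C^1$ and $G_{01}C^1$ one uses $g_1^0 \ne g_2^0$ and $g_1^0 \ne g_3^0$ respectively; in $G_{11}C^1$ one uses $g_1^1 \ne 0$ to see that positions $2,3,4$ give three independent kernel vectors and then matches the remaining seven star-slots against the kernel description from $\ddc{1,11\to 11}$.

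The one point requiring a bit of care, and the only plausible obstacle, is the $G_{11}C^1$ case: one must check that the span of the proposed seven-parameter family and the two-dimensional kernel $\ker \ddc{1,11\to 11}$ exhaust the nine-dimensional space $G_{11}C^1$, equivalently that the two vanishing slots ($6$th and $9$th) together with the kernel span a two-dimensional complement to the stated form. This reduces to a direct $2 \times 2$ non-degeneracy check on the sub-matrix of $\ddc{1,11\to 11}$ cut out by the two zero positions, which is immediate from $g_1^1 \ne 0$.
\qed
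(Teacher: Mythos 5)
Your overall strategy coincides with the paper's, whose entire proof is ``By inspection of the matrices for the $\ddc{p_2p_3\to p_2p_3}$'': the three-step scheme (read off the kernel, count dimensions against \lemmaref{hom0page}, check trivial intersection with the proposed span) is exactly that inspection, your dispatching of the unlisted cases is fine, and the $G_{00}C^0$ case is carried out correctly.

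However, several of the specific justifications you give are wrong even though the conclusions are true, and as written some steps would not go through. (1) For $G_{00}C^1$ the check is that $(1,1,1,0)\notin\ker \ddc{1,00\to 00}$, i.e.\ that $g_1^0+g_2^0+g_3^0\neq 0$ --- not that it avoids the image of $\ddc{0,00\to 00}$. Likewise the fifth entries of $\ddc{1,10\to10}$ and $\ddc{1,01\to01}$ are $-(g_1^0+g_2^0)$ and $-(g_1^0+g_3^0)$, so what is needed is nonvanishing of \emph{sums} $g_i^0+g_j^0$; this comes from the assumption that the sum of any two of the $g_i^0$ is negative (or, for the total sum, from $\tsum_{i<j}g_i^0g_j^0=0$ together with $g_i^0\neq0$), not from pairwise distinctness, which does not exclude $g_1^0=-g_2^0$. (2) In $G_{11}C^1$ the vectors $e_2,e_3,e_4$ are \emph{not} kernel vectors: columns $2,3,4$ of $\ddc{1,11\to11}$ equal $2g_1^1$ times distinct basis vectors of $G_{11}C^2$, so these positions belong to the complement, consistent with their being star slots. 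The kernel is two-dimensional, equal to $\image\,\ddc{0,11\to11}$, and the correct check is that it projects isomorphically onto coordinates $6$ and $9$; the relevant $2\times2$ determinant is $-g_1^1(g_3^0-g_2^0)\neq0$, read off from rows $6$ and $9$ of the two columns of $\ddc{0,11\to11}$, not from a submatrix of $\ddc{1,11\to11}$ --- indeed the sixth column of $\ddc{1,11\to11}$ is zero, so $e_6$ itself is one of the two kernel vectors. With these corrections the inspection goes through and matches the paper's (one-line) proof.
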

\begin{proof}
  By inspection of the matrices for the $\ddc{p_2p_3 \to p_2p_3}$.
\qed\end{proof}
%-----------------------------------------------------------------
\begin{lemma}[Contractions] \label{lemma:hipdr}
There are unique linear maps
\[
\xymatrix{
  G_{p_2p_3} C \ar@/^/[d]^{p_{p_2p_3}}
               \ar@(ul,ur)[]^{h_{p_2p_3}} \\
  H(\ddc{p_2p_3 \to p_2p_3}) \ar@/^/[u]^{i_{p_2p_3}}
}
\rule{20mm}{0pt}
\xymatrix{
  V_p \ar@/^/[d]^{p_p}
               \ar@(ul,ur)[]^{h_p} \\
  H(\ddc{p \to p}) \ar@/^/[u]^{i_p}
}
\]
such that
$i_{p_2p_3}$ associates to each element of the homology the unique
    representative in the span of the basis in \lemmaref{hbas};
$p_{p_2p_3}$ associates to each element of $\ker d_C$ 
   the corresponding element in the homology,
   and $\ker p_{p_2p_3}$ contains the complements in \lemmaref{cbas};
and $h_{p_2p_3}$ associates to each element of $\image d_C$
   the unique $d_C$-preimage in the complement in \lemmaref{cbas},
   and $\ker h_{p_2p_3}$ contains the elements
   in \lemmaref{hbas} and the complements in \lemmaref{cbas}.
Then $d i = p d = hi = ph = h^2 = 0$
and $pi = \mathbbm{1}$ and $ip = \mathbbm{1} - hd - dh$
with obvious abbreviations.
Analogous statements for
$i_p$, $p_p$, $h_p$. They are block-diagonal
with entries $i_{p_2p_3}$, $p_{p_2p_3}$, $h_{p_2p_3}$
with $p_2+p_3 = p$.
We denote by
$i_{p_2p_3}^k$,
$p_{p_2p_3}^k$,
$i_p^k$,
$p_p^k$
the corresponding maps at homological degree $k$.
\end{lemma}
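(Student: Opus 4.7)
The plan is to realize each contraction $(i_{p_2p_3}, p_{p_2p_3}, h_{p_2p_3})$ as the standard Hodge-theoretic strong deformation retract attached to a direct sum decomposition of $G_{p_2p_3}C$ into homology representatives, boundaries, and a complement of the cycles. Since the diagonal block $\ddc{p_2p_3 \to p_2p_3}$ is $C^\infty(\spatialm,\R)$-linear by \lemmaref{blocksdc}, the entire construction can be carried out in the category of free $C^\infty(\spatialm,\R)$-modules, and the resulting maps $i$, $p$, $h$ will automatically be $C^\infty(\spatialm,\R)$-linear.

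First I would combine Lemmas \ref{lemma:hbas} and \ref{lemma:cbas} as follows. Let $H \subset G_{p_2p_3}C$ denote the $C^\infty(\spatialm,\R)$-span of the basis vectors listed in \lemmaref{hbas}, let $A \subset G_{p_2p_3}C$ denote the complement of $\ker \ddc{p_2p_3 \to p_2p_3}$ provided by \lemmaref{cbas}, and set $B = \ddc{p_2p_3 \to p_2p_3}(A) \subset \ker \ddc{p_2p_3 \to p_2p_3}$. The crucial claim is the decomposition
\[
  G_{p_2p_3}C \;=\; H \,\oplus\, B \,\oplus\, A
\]
of free $C^\infty(\spatialm,\R)$-modules. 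Two facts underlie this: (i) $\ddc{p_2p_3 \to p_2p_3}|_A : A \to B$ is an isomorphism, which is immediate from $A \cap \ker \ddc{p_2p_3 \to p_2p_3} = 0$; and (ii) the vectors of \lemmaref{hbas} are cycles whose classes form a basis of the homology, so that $\ker \ddc{p_2p_3 \to p_2p_3} = H \oplus B$. Both follow by inspection against the explicit matrices displayed just before \lemmaref{hom0page}, with the rank tally in \lemmaref{hom0page} accounting for all the dimensions.

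With this decomposition in hand I would define $i_{p_2p_3}$ as the inclusion of $H$, identified with the homology via the quotient map; $p_{p_2p_3}$ as the projection onto $H$ along $B \oplus A$, again identified with the homology; and $h_{p_2p_3}$ as $(\ddc{p_2p_3 \to p_2p_3}|_A)^{-1}$ on $B$, extended by zero on $H$ and $A$. Uniqueness is automatic from the characterizing properties in the statement: $i_{p_2p_3}$ is pinned down on a basis of the homology, $p_{p_2p_3}$ is pinned down on both summands of $G_{p_2p_3}C = \ker \ddc{p_2p_3 \to p_2p_3} \oplus A$, and $h_{p_2p_3}$ is pinned down on each of the three summands of $H \oplus B \oplus A$. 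All the identities then reduce to a one-line case analysis on each summand; for instance $h^2 = 0$ because $\image h \subset A$ and $h|_A = 0$, $hi = ph = 0$ for the analogous reasons, and $\mathbbm{1} - hd - dh$ acts as $\mathbbm{1}, 0, 0$ on $H, B, A$ (using $dh|_B = \mathbbm{1}_B$ and $hd|_A = \mathbbm{1}_A$), matching $ip$.

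Finally, the statement for $(i_p, p_p, h_p)$ on $V_p$ follows without further work by taking direct sums over pairs with $p_2 + p_3 = p$: the differential $D_{pp}$ is block diagonal with respect to $V_p = \bigoplus_{p_2+p_3=p} G_{p_2p_3}C$ (as used in \eqref{eq:hd01234}), so the block diagonal assembly of the $G_{p_2p_3}C$-level maps satisfies all the required identities. The main obstacle is really only the bookkeeping step (i)--(ii) that verifies the splitting; once the right decomposition has been named, every subsequent check is mechanical.
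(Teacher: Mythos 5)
Your proposal is correct and is precisely the construction the paper has in mind: its entire proof is the one-line remark ``This is a contraction in standard form,'' and the three-way splitting $G_{p_2p_3}C = H \oplus B \oplus A$ with $h = (\ddc{p_2p_3\to p_2p_3}|_A)^{-1}$ on $B$ and zero elsewhere is exactly that standard form. Your write-up just makes explicit the bookkeeping the authors leave implicit.
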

\begin{proof}
  This is a contraction in standard form.
\qed\end{proof}
%%%%%%%%%%%%%%%%%%%%%%%%%%%%%%%%%%%%%%%%%%%%%%%%%%%%%%%%%%%%%%%%%%%%%%%%%
\subsection{Spectral sequence: the 1st page} \label{sec:1stpage}
By \secref{0thpage},
and of course with Assumption \ref{assp:mc} always in force,
the 1st page is the disconnected complex,
abbreviating $H^{0,1}(-) = H^0(-) \oplus H^1(-)$,
\[
\xymatrix{%
  H^{0,1} (D_{00})
    \ar[r]_{\Donezero}
    & H^1(D_{11})
    & H^1(D_{22}) \ar[r]_{\Dthreetwo}
    & H^2(D_{33}) \ar[r]_{\Dfourthree}
    & H^3(D_{44})
}
\]
where $\Donezero$
annihilates $H^1(D_{00})$ and we denote by
$\Donezero' : H^0(D_{00}) \to H^1(D_{11})$ the other, nontrivial part.
In block notation using the decompositions \eqref{eq:hd01234} we have
\begin{align*}
  \Donezero'
  &\;=\;
  p_1^1 \ddc{0,0\to 1} i_0^0
  \;=\;
  \begin{pmatrix}
    p_{10}^1 \ddc{0,00 \to 10} i_{00}^0 \\
    p_{01}^1 \ddc{0,00 \to 01} i_{00}^0
  \end{pmatrix}\\
  \Dthreetwo
  &\;=\;
  p_3^2 \ddc{1,2\to 3} i_2^1
  \;=\;
  \begin{pmatrix}
    p_{21}^2 \ddc{1,20 \to 21} i_{20}^1 & 0 & 0 \\
    0 & 0 & p_{12}^2 \ddc{1,02 \to 12} i_{02}^1
  \end{pmatrix}\\
  \Dfourthree
  &\;=\;
  p_4^3 \ddc{2,3\to 4} i_3^2
  \;=\;
  \begin{pmatrix}
    p_{22}^3 \ddc{2,21 \to 22} i_{21}^2 & p_{22}^3 \ddc{2,12 \to 22} i_{12}^2
  \end{pmatrix} 
\end{align*}
The leftmost and rightmost zeros in $\Dthreetwo$
are due to $\ddc{20\to 12} = \ddc{02 \to 21} = 0$,
the zeros in the middle are due to $i_{11}=0$,
equivalently
$H(\ddc{11 \to 11}) = 0$.

The spaces are free $C^{\infty}(\spatialm,\R)$-modules with bases chosen,
and $\Donezero'$, $\Dthreetwo$, $\Dfourthree$ are matrix differential operators.
By direct calculation,
\[
  \Donezero'\;=\;
  \begin{pmatrix}
 0 & 0 & 0 \\
 D_3 & 2 g_1^1 & 0 \\
 -2 g_2^1 & D_3 & 0 \\
 0 & 0 & D_3 \\
 0 & 0 & 0 \\
 0 & 0 & 0 \\
 D_2 & 0 & -2g_1^1\\
 0 & D_2 & 0 \\
 2g_3^1 & 0 & D_2 \\
 0 & 0 & 0
 \end{pmatrix}
 \rule{30pt}{0pt}
  \Dthreetwo
  \;=\;
  \begin{pmatrix}
    0 & 0 \\
    2 & 0 \\
    0 & 0 \\
    0 & 0 \\
    -D_2 & 0 \\
    0 & 0 \\
    0 & 0 \\
    0 & 2 \\
    0 & 0\\
    0 & - D_3
  \end{pmatrix}
\]
and
\[
  \Dfourthree\;=\;
  \begin{pmatrix}
    D_2 & 0 & -2g_1^1 & 0 & 0 & D_3 & 2g_1^1 & 0 & 0 & 0 \\
    0 & D_2 & 0 & 0 & 2 & -2g_2^1 & D_3 & 0 & 0 & 0 \\
    2g_3^1 & 0 & D_2 & 0 & 0 & 0 & 0 & D_3 & 0 & 2 \\
    0 & 0 & 0 & D_2 & 0 & 0 & 0 & 0 & D_3 & 0
  \end{pmatrix}
\]
Necessarily $\Dfourthree\Dthreetwo=0$, which can also be checked directly.

\begin{lemma} \label{lemma:1stpagehomology} \rule{0pt}{0pt}
  \begin{itemize}
    \item $\ker \Donezero' \simeq \R^3$ as vector spaces.
    \item $\ker \Dthreetwo = 0$.
    \item $\coker \Dfourthree = 0$.
  \end{itemize}
\end{lemma}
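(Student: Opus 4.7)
The three assertions will be tackled in turn: the first is the substantive one, while the other two are read off the displayed matrices.

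For $\ker \Dthreetwo = 0$: rows $2$ and $8$ of the displayed matrix are $(2,0)$ and $(0,2)$, so any kernel element $(f_1,f_2)$ satisfies $f_1 = f_2 = 0$. For $\coker \Dfourthree = 0$: given a target $(g_1,g_2,g_3,g_4) \in C^{\infty}(\spatialm,\R)^4$, I would exhibit an explicit preimage. Using $D_2 = \p_2$ on $\spatialm \simeq (-\eps,\eps)^3$ (Assumption \ref{assp:mc}), integrate in $x^2$ to find $f_1, f_4$ with $D_2 f_1 = g_1$ and $D_2 f_4 = g_4$; set $f_5 = g_2/2$ and $f_{10} = (g_3 - 2g_3^1 f_1)/2$; and set the remaining six components to zero. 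Substituting into the four rows of $\Dfourthree$ recovers $(g_1,g_2,g_3,g_4)$, yielding surjectivity.

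For $\ker \Donezero' \simeq \R^3$, the nontrivial rows of the displayed $\Donezero'$ are the Pfaffian system
\begin{align*}
  D_2 f_1 &= 2g_1^1 f_3, & D_2 f_2 &= 0, & D_2 f_3 &= -2g_3^1 f_1, \\
  D_3 f_1 &= -2g_1^1 f_2, & D_3 f_2 &= 2g_2^1 f_1, & D_3 f_3 &= 0,
\end{align*}
i.e.\ $D_2 \vec f = A_2 \vec f$ and $D_3 \vec f = A_3 \vec f$ with constant matrices $A_2, A_3$ depending on $g_1^1, g_2^1, g_3^1$. The plan is: (i) from the Lie algebra relation $[D_2, D_3] = \sum_k c_{23}^k D_k$ of the left-invariant basis $D_1,D_2,D_3$ on $\spatialm$, together with the compatibility identity $[D_2, D_3] \vec f = [A_2, A_3] \vec f$, derive $c_{23}^1 D_1 \vec f = \bigl([A_2, A_3] - c_{23}^2 A_2 - c_{23}^3 A_3\bigr) \vec f$; (ii) use the MC-equation for the homogeneous $\gamma$ (Lemma \depref{mcefreehom}) to pin down the precise identification between the $g_i^1$ and the structure constants $c_{ij}^k$ of $\spatialm$, showing in particular $c_{23}^1 \neq 0$, and hence obtaining a third constant-coefficient equation $D_1 \vec f = A_1 \vec f$; (iii) check the remaining Frobenius conditions $[D_i, D_j] \vec f = [A_i, A_j] \vec f$, which should reduce to the Jacobi identity for the Lie algebra of $\spatialm$ together with the same MC-equation; (iv) apply the Frobenius theorem to conclude that kernel elements are parametrized by $\vec f(0,0,0) \in \R^3$. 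An equivalent reformulation is that the closed system is exactly the PDE characterizing right-invariant vector fields $R = f_1 D_1 + f_2 D_2 + f_3 D_3$ on $\spatialm$ in the left-invariant frame, which is consistent with the interpretation of $H^0(d)$ in \theoremref{noobstr}.

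The main obstacle is step (ii): extracting from Lemma \depref{mcefreehom} the exact relation between the parameters $g_i^1$ of the homogeneous MC-element and the Lie algebra data of $\spatialm$, and verifying that it produces $c_{23}^1 \neq 0$. Once that identification is in hand, closing the Pfaffian into a Frobenius system and verifying the Jacobi-type integrability conditions in step (iii) should be a mechanical check, since all matrices are constant and everything reduces to finite-dimensional linear algebra over $\R$.
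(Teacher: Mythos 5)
Your handling of the second and third bullets matches the paper's: $\ker\Dthreetwo=0$ is read off rows $2$ and $8$, and surjectivity of $\Dfourthree$ is obtained by solving in a handful of columns using $D_2=\p_2$ from \eqref{eq:d2cond} (the paper uses the leftmost $4\times 4$ block as a linear ODE system in $x^2$ and invokes global existence; your column choice $1,4,5,10$ works just as well and even decouples the equations). For the first bullet the paper is much more direct than your Frobenius plan: it observes that the six equations cutting out $\ker\Donezero'$ say precisely that $X=f_1D_1+f_2D_2+f_3D_3$ commutes with $D_2$ and $D_3$, so the right-invariant vector fields lie in the kernel (dimension $\geq 3$); and conversely, since $[D_2,D_3]$ is a \emph{nonzero} multiple of $D_1$, the Jacobi identity gives $[D_1,X]=0$ as well, so $X$ is determined by its value at one point and the kernel is exactly the right-invariant fields ($\simeq\R^3$). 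Your steps (i)--(iv) would reach the same conclusion, but step (iii) is superfluous: the integrability of the closed system need not be checked because the lower bound is already witnessed by the right-invariant fields, and only the upper bound (your steps (i)--(ii) plus uniqueness along flows of a bracket-generating pair) is needed. The ``main obstacle'' you identify in step (ii) is real but is supplied wholesale by the input \depref{mcefreehom} from \citetwo: the homogeneous MC-element forces $[D_2,D_3]=-2g_1^1D_1$ and cyclic permutations (compare the explicit bounce-case normalization in Assumption~\ref{assp:mcbounce}, where $\gamma_1^1=1$ plays the role of $g_1^1$), and then $c_{23}^1\neq 0$ is exactly the hypothesis $g_1^1\neq 0$ of Assumption~\ref{assp:mc}. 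So your argument is correct modulo that citation, which the paper also uses, just silently.
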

\begin{proof}
  By inspection of their matrices.
  For $\Donezero'$
   note that the $D_i$ are the left-invariant
    vector fields on an open subset of a Lie group; 
every left-invariant vector field commutes with
every right-invariant vector field;
and the space of right-invariant vector fields is $\simeq \R^3$.
  The injectivity of $\Dthreetwo$ is clear.
  The map $\Dfourthree$ is surjective
  because already its leftmost $4 \times 4$
  block is surjective
  by \eqref{eq:d2cond}
  and the global existence theorem for linear ordinary differential
  equations.
\qed\end{proof}
%---------------------------------------------------------------------
%%%%%%%%%%%%%%%%%%%%%%%%%%%%%%%%%%%%%%%%%%%%%%%%%%%%%%%%%%%%%%%%%%%%%%%%%
\subsection{Spectral sequence: the 2nd page} \label{sec:2ndpage}
By \secref{1stpage},
and of course with Assumption \ref{assp:mc} always in force,
the 2nd page is the following complex,
with $\R^3$ in homological degree zero:
\[
\xymatrix{%
  \R^3 \oplus H^1(D_{00})
  & \frac{H^1(D_{11})}{\image \Donezero}
    \ar@/_4mm/[rr]_{\Dthreeone}
  & \jj{0}
  & \frac{\ker \Dfourthree}{\image \Dthreetwo}
  & \jj{0}
}
\]
Our main goal is to show that $\Dthreeone$ is surjective.
By the way the spectral sequence is constructed, $\Dthreeone$ is induced by,
with $h_2$ as in \lemmaref{hipdr}:
\[
 \ddc{1 \to 3}
  - \ddc{2\to 3} h_2 \ddc{1\to 2}
  \;:\; V_1 \to V_3
\]
The last map induces a map
$H(D_{11}) \to H(D_{33})$,
because  $H(D_{22})=0$
as witnessed by $h_2$ via $h_2 D_{22} + D_{22} h_2 = \mathbbm{1}$.
More specifically, it induces a map
$\delta : H^1(D_{11}) \to H^2(D_{33})$
with $\image \delta \subset \ker \Dfourthree$.
We emphasize $\delta$
because it is a map between free $C^{\infty}(\spatialm,\R)$-modules
for which bases have been chosen,
and it can be written as a matrix differential operator,
of size $10 \times 10$.
Clearly
\begin{equation}\label{eq:d13eq}
  \coker \Dthreeone = 0
  \qquad
  \Longleftrightarrow
  \qquad
  \image \delta
  \;+\;
  \image \Dthreetwo \;=\; \ker \Dfourthree
\end{equation}
In block notation using the decompositions \eqref{eq:hd01234} we have
\begin{multline*}
  \delta
  \;=\;
  \begin{pmatrix} p_{21}^2 & 0 \\ 0 & p_{12}^2 \end{pmatrix}
  \begin{pmatrix} \ddc{1,10\to 21} & \ddc{1,01\to 21}\\
                  \ddc{1,10\to 12} & \ddc{1,01\to 12}
                \end{pmatrix}
  \begin{pmatrix} i_{10}^1 & 0 \\ 0 & i_{01}^1 \end{pmatrix}
  \\-
  \begin{pmatrix} p_{21}^2 & 0 \\ 0 & p_{12}^2 \end{pmatrix}
  \begin{pmatrix} \ddc{1,11\to 21} \\ \ddc{1,11\to 12} \end{pmatrix}
  h_{11}^2
  \begin{pmatrix}
                  \ddc{1,10\to 11} & \ddc{1,01\to 11}
                \end{pmatrix}
  \begin{pmatrix} i_{10}^1 & 0 \\ 0 & i_{01}^1 \end{pmatrix}
\end{multline*}
where $h_{11}^2$ is that part of $h_{11}$ in
\lemmaref{hipdr} that maps from
homological degree $k=2$ back to $k=1$, a $9 \times 7$ matrix with constant entries.
Now:
\begin{itemize}
  \item Let $\slashedDfourthree$ be the $1 \times 5$ matrix operator obtained
from $\Dfourthree$ by deleting rows $1$, $2$, $3$ and columns $2$, $3$, $5$, $8$, $10$.
Explicitly
$\slashedDfourthree
  =
  \begin{pmatrix} 0 & D_2 & 0 & 0 & D_3 \end{pmatrix}$.
\item 
Let $\slashed{\delta}$ be the $5 \times 10$ matrix obtained
from $\delta$ by deleting rows $2$, $3$, $5$, $8$, $10$.
\end{itemize}
By construction
$\image \slashed{\delta} \subset \ker \slashedDfourthree$.
We claim that
\begin{equation} \label{eq:d13eq2}
  \coker \Dthreeone = 0
  \qquad
  \Longleftrightarrow
  \qquad
  \image \slashed{\delta}
  \;=\; \ker \slashedDfourthree
\end{equation}
This follows from \eqref{eq:d13eq}
by deriving a sequence of equivalent conditions, as follows.
Column $5$ of $\Dfourthree$
has a single nonzero entry, in row $2$, so we get an equivalent condition if
we delete column $5$ and row $2$.
Similarly column $10$ and row $3$.
Column $3$ now only contains a single nonzero non-deleted entry, in row $1$,
so we can also delete
column $3$ and row $1$.
Accordingly we must delete rows $3$, $5$, $10$ in $\delta$ and $\Dthreetwo$.
Given the structure of $\Dthreetwo$, we
can now delete columns $2$ and $8$ in $\Dfourthree$,
accordingly rows $2$ and $8$ in $\delta$ and $\Dthreetwo$. 
No nonzero non-deleted entry is left in $\Dthreetwo$ and the claim follows.

Explicitly,
\begin{multline*}
  \slashed{\delta}\;=\;\\
  \left(
  \begingroup
  \setlength\arraycolsep{4pt}
  \begin{array}{c c c c c}
    g_1^0 + g_2^0 + 2g_3^0 & -2D_1 - \tfrac{D_2D_3}{2g_1^1}
    & -D_2 & D_3 & g_1^0-g_2^0 \\
    \tfrac{g_1^0+g_2^0+g_3^0}{4g_1^1}
    (D_3D_2 + 2g_1^1 D_1) & 0 & 0 & 0 & 0 \\
    0 & \tfrac{D_2D_2}{2g_1^1} & 0 & -2D_2 & 0 \\
    \tfrac{(g_1^0)^2 + g_1^0g_3^0 + (g_3^0)^2}{g_1^1(g_2^0-g_3^0)} D_2 
     & 0 & \tfrac{D_2D_2}{2g_1^1} + 2g_3^1 & 0 & 0 \\
    -\tfrac{g_1^0+g_2^0+g_3^0}{4g_1^1}
    (D_2D_2 + 4g_1^1 g_3^1) & 0 & 0 & 0 & 0 \\
  \end{array}\endgroup\right. \displaybreak[0]\\
  \left.
  \begingroup
  \setlength\arraycolsep{4pt}
  \begin{array}{c c c c c}
    0 & \tfrac{D_3D_3}{2g_1^1} & 2D_3 & 0 & 0 \\
    -\tfrac{g_1^0+g_2^0+g_3^0}{4g_1^1}
    (D_3D_3 + 4g_1^1 g_2^1) & 0 & 0 & 0 & 0 \\
    -(g_1^0 + 2g_2^0 + g_3^0) & D_1 - \tfrac{D_2D_3}{2g_1^1}
    & -D_2 & D_3 & g_3^0-g_1^0 \\
    -\tfrac{2(g_1^0)^2 + (g_3^0)^2}{2g_1^1(g_2^0-g_3^0)} D_3
    & \frac{g_2^1}{g_1^1} D_2 & D_1 - \frac{D_2D_3}{2g_1^1} & 2g_2^1 & \frac{g_1^0-g_3^0}{2g_1^1}D_3 \\
    \tfrac{g_1^0+g_2^0+g_3^0}{4g_1^1}
    (D_2D_3 - 2g_1^1 D_1) & 0 & 0 & 0 & 0
  \end{array}\endgroup\right)
\end{multline*}

Consider the submatrix of $\slashed{\delta}$ 
containing only columns $4$, $5$, $9$.
Using Assumption \ref{assp:mc},
specifically $g_2^1 \neq 0$ 
and \eqref{eq:d2cond} and $g_1^0 - g_2^0 \neq 0$,
this submatrix' image is already all
vectors of the form $(\ast,0,\ast,\ast,0)$.
Therefore we get a condition equivalent to \eqref{eq:d13eq2} by deleting
columns $1$, $3$, $4$ of $\slashedDfourthree$
and rows $1$, $3$, $4$ of $\slashed{\delta}$.
Only a $2\times 2$
submatrix of the
non-deleted part of $\slashed{\delta}$
remains, all other entries are zero.
We get
\begin{multline} \label{eq:2by2sys}
  \coker \Dthreeone = 0
  \qquad
  \Longleftrightarrow
  \\
  \image
  \begin{pmatrix}
    D_3D_2 + 2g_1^1 D_1 & -D_3D_3 - 4 g_1^1 g_2^1\\
    -D_2D_2 - 4g_1^1 g_3^1 & D_2D_3 - 2g_1^1 D_1
  \end{pmatrix}
  \;=\; \ker \begin{pmatrix} D_2 & D_3 \end{pmatrix}
\end{multline}
Here $\subset$ is by construction,
so $\supset$ is what we show.
We show that the first column of the $2 \times 2$ matrix
suffices,
using coordinates as in \eqref{eq:d2cond}.
Given any $(f_2,f_3) \in \ker (D_2\, D_3)$,
it suffices to find a function $h$ such that
\begin{align*}
  ((D_3D_2+2g_1^1D_1)h)|_\Sigma & = f_2|_\Sigma\\
(-D_2D_2-4g_1^1 g_3^1)h & = f_3
\end{align*}
where $\Sigma = \{x^2 = 0\}$.
The second equation has a unique solution for every choice
of $h|_\Sigma$ and $(\p_2 h)|_\Sigma$ by \eqref{eq:d2cond},
and we fix $(\p_2 h)|_\Sigma = 0$.
The first equation is now equivalent to
$2g_1^1 \p_1 h = f_2 + a_2(f_3 + 4g_1^1 g_3^1 h)$
as an equation on $\Sigma$ for $h|_\Sigma$,
with $a_2$ as in $D_3 = a_1 \p_1 + a_2 \p_2 + a_3 \p_3$.
Since $g_1^1 \neq 0$, a solution $h|_\Sigma$ exists,
again by \eqref{eq:d2cond}.
We have proved:
\begin{lemma}
  $\coker \Dthreeone = 0$.
\end{lemma}
By similar arguments
we have
$\image \Donezero' \subset \ker \slashed{\delta}$, which one can also
check directly
using the explicit formulas given before.
Every $x$ satisfies $\delta x \in \image \Dthreetwo$
if and only if $\slashed{\delta}x = 0$;
here $\Rightarrow$ is trivial,
$\Leftarrow$ uses $\image \delta \subset \ker \Dfourthree$.
Hence (see \theoremref{noobstr}):
\begin{lemma} \label{lemma:kerd13}
  $\ker \Dthreeone = \ker \slashed{\delta} / \image \Donezero'$. 
\end{lemma}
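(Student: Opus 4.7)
The plan is to assemble the ingredients set up in the preceding paragraphs into a short argument. First I would pin down exactly what $\ker \Dthreeone$ is as a subspace of $H^1(D_{11})/\image \Donezero'$: since $\Donezero$ annihilates $H^1(D_{00})$ we have $\image \Donezero = \image \Donezero'$, and by construction of the second page of the spectral sequence, $\Dthreeone$ sends $[x]$ to the class of $\delta x$ in $\ker \Dfourthree/\image \Dthreetwo$. Hence $[x] \in \ker \Dthreeone$ if and only if $\delta x \in \image \Dthreetwo$.

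Next I would establish the equivalence $\delta x \in \image \Dthreetwo \Leftrightarrow \slashed{\delta} x = 0$, already invoked in the text. For the forward direction, inspection of the explicit matrix of $\Dthreetwo$ shows that its nonzero rows lie among those deleted in forming $\slashed{\delta}$ (namely rows $2$, $5$, $8$, $10$, all of which are deleted), so $\slashed{\delta} x$ vanishes whenever $\delta x = \Dthreetwo y$. The backward direction is the genuinely interesting step: assuming $\slashed{\delta} x = 0$, i.e.\ the non-deleted entries $(\delta x)_1,(\delta x)_4,(\delta x)_6,(\delta x)_7,(\delta x)_9$ vanish, I would use the four equations $\Dfourthree(\delta x)=0$ (automatic since $\image \delta \subset \ker \Dfourthree$) to solve for the remaining entries. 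Row $1$ of $\Dfourthree$ together with $g_1^1 \neq 0$ from Assumption \ref{assp:mc} forces $(\delta x)_3 = 0$; rows $2$ and $3$ then give $(\delta x)_5 = -\tfrac{1}{2}D_2(\delta x)_2$ and $(\delta x)_{10} = -\tfrac{1}{2}D_3(\delta x)_8$, while row $4$ is automatic. Setting $y = (\tfrac{1}{2}(\delta x)_2,\tfrac{1}{2}(\delta x)_8)$ then produces $\delta x = \Dthreetwo y$, as required.

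Finally I would combine this equivalence with the inclusion $\image \Donezero' \subset \ker \slashed{\delta}$ noted in the preceding remarks, which can be checked directly from the explicit matrices of $\Donezero'$ and $\slashed{\delta}$. The equivalence identifies $\ker \Dthreeone$ with the image in $H^1(D_{11})/\image\Donezero'$ of $\ker \slashed{\delta} \subset H^1(D_{11})$, and the inclusion guarantees that this image is precisely $\ker\slashed{\delta}/\image\Donezero'$. The only real obstacle is the backward implication in the second step; it is not a formal consequence of the spectral sequence data but relies on the sparsity pattern of $\Dfourthree$ and on $g_1^1 \neq 0$. The rest is bookkeeping.
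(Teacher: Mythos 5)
Your proposal is correct and follows the same route as the paper: the whole content is the equivalence $\delta x \in \image \Dthreetwo \Leftrightarrow \slashed{\delta}x = 0$ (forward direction from the sparsity of $\Dthreetwo$, backward direction from $\Dfourthree(\delta x)=0$ and $g_1^1\neq 0$) combined with $\image \Donezero' \subset \ker\slashed{\delta}$, which is exactly what the paper asserts in the paragraph preceding the lemma. You have merely written out explicitly the row-by-row bookkeeping that the paper compresses into ``$\Leftarrow$ uses $\image\delta\subset\ker\Dfourthree$'', and your computations ($(\delta x)_3=0$ from row $1$, $(\delta x)_5=-\tfrac12 D_2(\delta x)_2$ from row $2$, $(\delta x)_{10}=-\tfrac12 D_3(\delta x)_8$ from row $3$) are all consistent with the displayed matrices.
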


This could be analyzed further.
Informally counting $C^{\infty}(\spatialm,\R)$-ranks,
and knowing $\image \slashed{\delta} \subset \ker \slashedDfourthree$,
we expect that $\ker \slashed{\delta}$
has rank $10 -5 + 1 = 6$,
and $\image \Donezero'$ has rank $3$,
so $\ker \Dthreeone$ should have rank $6-3 = 3$.
But these are not free $C^{\infty}(\spatialm,\R)$-modules,
so the counting does not really make sense,
and in any case is modulo lower dimensional data.
One can write down a parametrization of $\ker \Dthreeone$,
but we will not.

\section{The homology in $\abounce$\\
of some homogeneous Maurer-Cartan elements}\label{sec:homologybounce}
This section is closely analogous to \secref{homology}.
In fact, the differences are so small that we only say what 
has to be modified relative to \secref{homology}.
For a homogeneous $\gamma \in \MC(\abounce)$
we compute the homologies $H^k(d)$ of
\[
  d = [\gamma,-]\;\;\in\;\; \End^1(\abounce)
\]

%-----------------------------------------------------------------------
\begin{assumption}[In force through \secref{homologybounce}] \label{assp:mcbounce}
We assume
\[
  \gamma \in \MC(\abounce) \cap (\theta_0 D_0 + {\abounce^1}_{,G})
\]
is given,
using the over-parametrization in Lemma \depref{npar}, by
\begin{align*}
\mu_1  \;&=\;  -\tfrac{1}{2} \log(2\cosh t) &
\gamma_1^1  \;&=\;  1\\
\mu_2  \;&=\;  -\tfrac{tu}{2} + \tfrac{1}{2} \log(2\cosh t) &
\gamma_2^1  \;&=\;  g_2^1\\
\mu_3  \;&=\;  -\tfrac{t}{2u} + \tfrac{1}{2} \log(2\cosh t) &
\gamma_3^1  \;&=\;  g_3^1\\
\gamma_1^0  \;&=\;  \tfrac{1}{2}-\chi &
(\beta_1,\beta_2,\beta_3) \;&=\;  (D_1,D_2,D_3)\\
\gamma_2^0  \;&=\;  -\tfrac{1}{2} (1+u)+\chi &
\gamma_{1,2,3}^{2,3,4,5,6}  \;&=\; 0 \\
\gamma_3^0  \;&=\;  -\tfrac{1}{2} (1+\tfrac{1}{u})+\chi
\end{align*}
where $\transition = \tfrac{1}{2}(1+\tanh t)$,
a special case of the solution 
in Theorem \depref{bouncesol}, with:
  \begin{itemize}
    \item $u>0$ is constant and $u \neq 1$.
    \item $g_2^1$ and $g_3^1$ are constant and nonzero.
    \item $\spatialm$ is an open subset of a smooth Lie group
      containing the identity element,
      and $D_1$, $D_2$, $D_3$ is a basis of left-invariant vector fields,
      with $[D_2,D_3] = -2D_1$ and $[D_3,D_1] = -2g_2^1 D_2$ and $[D_1,D_2] = -2g_3^1D_3$.
    \item There is an $\eps > 0$ and a diffeomorphism
      $(x^1,x^2,x^3) : \spatialm \to (-\eps,\eps)^3$
      such that $(0,0,0)$ is the identity element and,
      with $\p_1,\p_2,\p_3$ the partial
      derivatives for the coordinate system $x^1,x^2,x^3$,
      we have
      \begin{equation}\label{eq:d2condbounce}
        D_2 = \p_2
        \qquad
        D_1|_{x^2 = 0} = \p_1
      \end{equation}
  \end{itemize}
  \end{assumption}
  \begin{theorem}[No obstructions] \label{theorem:noobstrbounce}
    With Assumption \ref{assp:mcbounce}
    we have:
    \begin{itemize}
      \item $H^0(d) \simeq \R^3$.
      \item $\Gr H^1(d) \simeq H^1(D_{00}) \oplus \ker \Dthreeone$,
        with symbols defined afresh,
        see Lemmas \ref{lemma:hom0pagebounce}, \ref{lemma:kerd13bounce}.
      \item $H^2(d) = 0$.
      \item $H^3(d) = 0$.
      \item $H^4(d) = 0$.
    \end{itemize}
    Here $\simeq$ is an isomorphism as vector spaces,
    and $\Gr$ is the associated graded for the decreasing filtration
    coming from
    the $\Z$-grading of $\abounce$ by $p_2 + p_3$.
  \end{theorem}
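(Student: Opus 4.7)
The plan is to follow \secref{homology} step by step, noting only what must be adjusted. First, I would construct the contraction analogous to Lemmas \ref{lemma:Cspace} and \ref{lemma:defret}: set $C = \mathcal{U}_G / t\mathcal{U}_G$ with the basis from Table \ref{table:EG}, define the composition $\oldw: \abounce_{,G} \hookrightarrow \abounce \xrightarrow{d} \abounce \twoheadrightarrow \abounce / \abounce_{,G}$, and observe that, relative to a filtration-compatible basis, $\oldw$ is lower block triangular with diagonal blocks of the form $D_0 \mathbbm{1} + M(t,\cdot)$, where now $M$ depends on $t$ through $\mu_i(t)$ and $\transition(t)$. Global existence of linear ODEs still supplies the right-inverse, so the contraction and the induced $d_C \in \End^1(C)$ exist as before. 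The structural statements \lemmaref{fomdo} and \lemmaref{blocksdc} carry over unchanged, so the spectral sequence associated to the decreasing filtration by $p_2 + p_3$ is available.

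The main computational step is the 0th page. The diagonal blocks $\ddc{p_2p_3 \to p_2p_3}$ must be recomputed with entries now involving the $t$-dependent $\mu_i$, $\gamma_i^0$, and $\transition$, and the algebraic identity among the $g_i^0$ used in \secref{0thpage} is replaced by the bounce Maurer-Cartan equation guaranteed by Theorem \depref{bouncesol}. I expect the rank pattern of \lemmaref{hom0page} to persist and to obtain bases of representatives, complements of kernels, and contractions $(i_{p_2p_3}, p_{p_2p_3}, h_{p_2p_3})$ analogous to Lemmas \ref{lemma:hbas}--\ref{lemma:hipdr}, all produced by a computer algebra system. The output then feeds into the 1st page.

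On the 1st page I would write the three matrix differential operators $\Donezero'$, $\Dthreetwo$, $\Dfourthree$ explicitly. The nontrivial brackets $[D_2,D_3] = -2D_1$, $[D_3,D_1] = -2g_2^1 D_2$, $[D_1,D_2] = -2g_3^1 D_3$ in \asspref{mcbounce} shift a few constant entries relative to the free case, but surjectivity of $\Dfourthree$ and injectivity of $\Dthreetwo$ still follow from inspection combined with \eqref{eq:d2condbounce} and linear-ODE existence in $x^2$. The kernel of $\Donezero'$ is again the space of right-invariant vector fields on $\spatialm$, hence $\simeq \R^3$, yielding the $H^0$ claim.

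The principal obstacle, as in \secref{2ndpage}, is the surjectivity of the 2nd-page differential $\Dthreeone$ induced by $\ddc{1 \to 3} - \ddc{2 \to 3} h_2 \ddc{1 \to 2}$. I would run the same block reductions leading to \eqref{eq:d13eq2} and \eqref{eq:2by2sys} and reduce to solving a $2 \times 2$ second-order system on $\ker(D_2\; D_3)$. The entries of this system pick up curvature-type corrections from the nontrivial brackets, but the same two-step strategy still applies: first solve the second equation as a linear ODE in $x^2$ using \eqref{eq:d2condbounce} with a free choice of $h|_\Sigma$ and $(\p_2 h)|_\Sigma$, then restrict the first equation to $\Sigma = \{x^2 = 0\}$ and solve the resulting first-order ODE in $x^1$ using $\gamma_1^1 = 1 \neq 0$. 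Once $\coker \Dthreeone = 0$ is established, the spectral sequence stabilizes at the 3rd page, and the vanishings $H^k(d) = 0$ for $k = 2, 3, 4$, together with the identification of $H^0(d)$ and $\Gr H^1(d)$, are read off from the final page exactly as at the end of the proof of \theoremref{noobstr}.
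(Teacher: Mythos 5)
Your proposal follows essentially the same route as the paper: the paper's proof of this theorem is precisely "run the \secref{homology} machinery again" --- rebuild the contraction onto $C=\mathcal{U}_G/t\mathcal{U}_G$, recompute the 0th-page blocks (which after reducing mod $t$ have constant entries in $u,g_2^1,g_3^1$, not $t$-dependent ones), observe that $\Donezero'$, $\Dthreetwo$, $\Dfourthree$ come out exactly as before with $g_1^1$ replaced by $1$, and reduce $\coker\Dthreeone=0$ to the same $2\times 2$ system \eqref{eq:2by2sys} solved by the two-step ODE argument via \eqref{eq:d2condbounce}. The only inaccuracies are cosmetic (the 1st-page matrices are unchanged rather than shifted by the commutators, and one must set $s_1=1$ since the bracket on $\abounce$ ignores that grading), so the plan is sound.
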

  \begin{proof}
    A spectral sequence calculation analogous to that for \theoremref{noobstr}.
  \qed\end{proof}

%----------------------------------------------------------------------------
\subsection{Overview}
The analogy with \secref{homology} is via the module identification
$\abounce \simeq \mathcal{U} \simeq \afree$,
see {\citetwo},
and the fact that $p_1$ played a passive role.
The bracket on $\abounce$ only respects $p_2p_3$.
The general structure of the spectral sequence is the same.
The differential $d$ is new,
and so are the objects derived from it.
Beware that we will use the same symbols as before,
even though they have a different meaning.
%%%%%%%%%%%%%%%%%%%%%%%%%%%%%%%%%%%%%%%%%%%%%%%%%%%%%%%%%%%%%

%--------------------------------------------------------------
\subsection{Contraction and the complex $(C,d_C)$}

%--------------------------------------------------------------
\begin{lemma}[The space $C$ and basis elements] \label{lemma:Cspacebounce}
  Define
  \[
    C \;=\; \mathcal{U}_G/t\mathcal{U}_G
    \;\simeq \;{\abounce}_{,G} / t {\abounce}_{,G}
  \]
  A basis is given as in \lemmaref{Cspace}.
  In this section we set $s_1 = 1$ because the bracket on $\abounce$
  does not respect the grading that $s_1$ keeps track of.
\end{lemma}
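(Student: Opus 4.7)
The plan is to transfer the argument of \lemmaref{Cspace} across the module identification $\abounce \simeq \mathcal{U} \simeq \afree$ recalled in the overview of \secref{homologybounce}. First I would observe that since this identification is purely module-theoretic (the bracket plays no role in defining $\mathcal{U}_G$ or the action of multiplication by $t$), the splitting underlying $\mathcal{U} = \mathcal{U}_G \oplus \mathcal{U}_G^\perp$ carries over, and the isomorphism $\mathcal{U}_G/t\mathcal{U}_G \simeq \abounce_{,G}/t\abounce_{,G}$ follows directly. The $C^{\infty}(\spatialm,\R)$-module structure and the rank-$72$ basis listed in Table \ref{table:EG} can then be reused verbatim.

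The substantive point is justifying the normalization $s_1 = 1$. Because $\abounce = \pbounce/(s_2,s_3)$ arises from a 2-index Rees construction keeping track only of $p_2p_3$, the formal variable $s_1$ does not enter the algebra structure on $\abounce$; the $p_1$-grading it would track is not respected by the bracket, as flagged in the overview. Consequently the labels $s^\alpha = s_1^{p_1}s_2^{p_2}s_3^{p_3}$ from \lemmaref{Cspace} specialize to $s_2^{p_2}s_3^{p_3}$ with no loss of information, and the basis elements listed in Table \ref{table:EG} remain $\R$-linearly independent in the bounce setting.

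The main obstacle, such as it is, is purely notational: one must ensure that the symbols $b^\alpha_i$ and $b^{k,\alpha}_i$ inherited from \lemmaref{Cspace} are reinterpreted unambiguously in the bounce context, and that the rank count over $C^\infty(\spatialm,\R)$ survives the passage from $\afree$ to $\abounce$. Since the underlying table entries are $\R$-linear combinations that do not involve the $s_i$ at all, this reinterpretation is immediate. All that remains is to check that freeness of $\mathcal{U}_G$ over $C^\infty(\R \times \spatialm,\R)$ passes to the quotient by $t$, which is standard.
\qed
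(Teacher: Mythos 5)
Your proposal is correct and matches the paper's (implicit) approach: the paper gives no proof for this lemma and marks the proof of the analogous \lemmaref{Cspace} as omitted, treating the statement as an immediate transfer of the module identification $\abounce \simeq \mathcal{U} \simeq \afree$ with the $s_1=1$ normalization, which is exactly what you spell out. Your elaboration of why setting $s_1=1$ loses no information is a reasonable filling-in of detail the paper leaves to the reader.
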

%----------------------------------------------------------------------------
\begin{lemma}[The contraction and the differential $d_C$] \label{lemma:defretbounce}
  Like \lemmaref{defret}, with
\[
  w\;:\; {\abounce}_{,G} \hookrightarrow \abounce 
  \xrightarrow{d} \abounce \twoheadrightarrow \abounce/{\abounce}_{,G}
\]
\end{lemma}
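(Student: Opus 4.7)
The plan is to parallel the proof of Lemma \ref{lemma:defret} almost verbatim, exploiting the fact that the only structural differences between $\afree$ and $\abounce$ are the specific form of the MC-element $\gamma$ and the absence of a $p_1$-grading on the bracket (it only respects $p_2 p_3$). I must establish three things: surjectivity of $w$, the existence of a homological contraction onto $\ker w$, and that $\ker w \to C$ is an isomorphism inducing a differential $d_C$ that respects the $p_2 p_3$-filtration.

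First, I would analyze the block structure of $w$. Fix the basis of $\mathcal{U}_G$ from \lemmaref{Cspacebounce}, ordered compatibly with the decreasing $p_2 p_3$-filtration, and a complementary basis for $\abounce/{\abounce}_{,G}$. Since by Assumption \ref{assp:mcbounce} we have $\gamma \in \theta_0 D_0 + {\abounce^1}_{,G}$, the only summand of $\gamma$ that contributes a $D_0$-derivative to the diagonal of $w$ is $\theta_0 D_0$ itself: every other summand either acts purely by multiplication by a smooth function of $t$, or carries a factor $s_2$, $s_3$, or $s_2 s_3$ that strictly increases the $p_2 p_3$-filtration and so lands strictly below the diagonal. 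Consequently each diagonal block of $w$ has the form $D_0 \mathbbm{1} + M(t)$ with $M(t)$ a finite matrix of smooth functions of $t$, built from $\mu_1,\mu_2,\mu_3$, $\transition$, the constants $u, g_2^1, g_3^1$, and their $t$-derivatives.

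Next, surjectivity of $w$ follows from global existence for linear ODEs. The entries of $M(t)$ are smooth on all of $\R$ since $\cosh t > 0$ and $\tanh t$ are globally smooth, so for every element of $\abounce/{\abounce}_{,G}$ one solves the triangular system block by block, from top to bottom, each block reducing to a linear inhomogeneous ODE in $t$ admitting a global smooth solution. This furnishes an $\R$-linear right inverse, and simultaneously identifies $\ker w$ with the space of initial data at $t = 0$, which is exactly $C = \mathcal{U}_G/t\mathcal{U}_G$. The existence of the homological contraction of $(\abounce,d)$ onto $(\ker w, d|_{\ker w})$ is then a formal consequence of surjectivity of $w$, by the same abstract argument invoked in \lemmaref{defret} and \citeone. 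That the induced $d_C$ respects the filtration is immediate because $d = [\gamma,-]$ does, with $\gamma \in F_{\geq 00}\abounce$.

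The only genuinely new input is the verification that the diagonal blocks still have $D_0\mathbbm{1}$ as their leading piece; this is the sole point requiring care, since the bounce MC-element is visibly more intricate than the homogeneous one in \secref{homology} and involves the spatial vector fields $D_1,D_2,D_3$ with nontrivial structure constants. However, these spatial derivatives enter $\gamma$ only paired with the factors $s_2 s_3$, $s_3$, $s_2$ (after setting $s_1=1$ as in \lemmaref{Cspacebounce}), all of which strictly increase the $p_2 p_3$-filtration; hence they contribute only to strictly subdiagonal blocks and never to the diagonal. Once this is granted, the remainder of the argument is literally that of \lemmaref{defret}.
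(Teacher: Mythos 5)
Your proposal is correct and follows essentially the same route as the paper: the paper proves \lemmaref{defretbounce} simply by reference to the proof of \lemmaref{defret}, which is exactly the argument you reconstruct (lower block triangularity with diagonal blocks of the form $D_0\mathbbm{1}+M$, surjectivity via global existence for linear ODEs, identification of $\ker w$ with initial data, and the abstract contraction). Your added verification that the spatial derivatives $D_1,D_2,D_3$ enter only with $s_2s_3$, $s_3$, $s_2$ after setting $s_1=1$, hence only strictly below the diagonal, is precisely the point that makes the transfer to $\abounce$ legitimate.
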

%-------------------------------------------------------------------------
\begin{example}
  In comparison to \secref{homology}, it is a little harder
  to explicitly write down elements of $\ker w$.
  However this is not required to calculate the differential $d_C$,
  it suffices to make finite order Taylor expansions of elements of $\ker w$ about $t=0$.
  \end{example}
\begin{definition}[Other gradings of $C$] \label{def:Cgradingsbounce}
  Like Definition \ref{def:Cgradings}.
\end{definition}
\begin{lemma}[$d_C$ is a first order matrix differential operator] \label{lemma:fomdobounce}
  Like \lemmaref{fomdo}.
\end{lemma}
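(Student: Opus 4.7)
The proof should parallel that of Lemma \ref{lemma:fomdo} essentially verbatim, with the main changes driven by setting $s_1 = 1$ and by the different Lie algebra structure on the spatial derivatives. My plan is to verify two claims in order: (i) that $d_C$ is at most first order in $D_1, D_2, D_3$ and has no $D_0$ dependence, and (ii) that the coefficient matrices of the $D_i$ are strictly lower block triangular with respect to the decomposition $C = V_0 \oplus V_1 \oplus V_2 \oplus V_3 \oplus V_4$ indexed by $p_2 + p_3$, while the zero-order part is lower block triangular.

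For claim (i), I would note that $\gamma$ is linear in the derivations $D_0, D_1, D_2, D_3$: its derivation content is the fixed piece $\theta_0 D_0$ together with $\beta_i D_i = D_i$ for $i=1,2,3$, and all the remaining summands coming from the over-parametrization in Assumption \ref{assp:mcbounce} are multiplication operators (involving the $\mu_i, \gamma_i^0, \gamma_i^1, \transition$ and their $t$-derivatives). Hence $d = [\gamma,-]$ is a first-order differential operator. The contraction of Lemma \ref{lemma:defretbounce} identifies $\ker w$ with initial data on $\{t = 0\}$, eliminating $D_0$ exactly as in the free case, so $d_C$ retains only first-order dependence on the spatial derivatives $D_1, D_2, D_3$.

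For claim (ii), the key is tracking the $s$-monomials that accompany each $D_i$ in $\gamma$ once we have set $s_1 = 1$. The factors are $s_2 s_3, s_3, s_2$ for $D_1, D_2, D_3$ respectively, and each of these strictly increases $p_2 + p_3$ by at least $1$. Hence $\mathrm{ad}_{\beta_i D_i}$ sends $V_p$ into $V_{> p}$, which is exactly strict lower block triangularity of $A_1, A_2, A_3$. The non-derivative terms of $\gamma$ similarly sit in positive $p_2 + p_3$ degrees (or in degree $0$, contributing to the diagonal blocks of $B$), giving the weak lower block triangular structure of $B$. Passing from $d$ to $d_C$ via contraction preserves this block structure because $w$ is itself block triangular in the same basis, so its $\R$-linear right inverse is too.

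The main obstacle to be careful about is the non-abelian Lie structure on the $D_i$: the commutators $[D_2,D_3] = -2D_1$ etc.\ might in principle generate additional derivative terms when commutators are computed, but these commutators only appear when two derivation terms are bracketed against each other, and since both entering $s$-monomials are already strictly positive in $p_2 + p_3$, the output lands in a still lower filtration block. Thus the block-triangularity is preserved, and the commutator contributions only populate already-permitted strict subdiagonal blocks, at most first order in derivatives.
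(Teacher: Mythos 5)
Your proposal is correct and matches the paper's (very terse) argument: the entire content of the cited proof is that $D_1,D_2,D_3$ enter the Maurer--Cartan element with the monomials $s_2s_3$, $s_3s_1$, $s_1s_2$ (here with $s_1=1$), each of strictly positive $p_2+p_3$ degree, which forces the diagonal blocks of $A_1,A_2,A_3$ to vanish. Your additional checks --- that the contraction onto initial data at $t=0$ removes $D_0$, and that the non-abelian commutators $[D_i,D_j]$ only feed already-permitted strict subdiagonal blocks --- are sound elaborations of the same route rather than a different one.
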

\begin{lemma}[Blocks of $d_C$] \label{lemma:blocksdcbounce}
  Define the blocks $\ddc{p_2p_3\to q_2q_3}$
      and $\ddc{p\to q}$ just like in \lemmaref{blocksdc}.
  The same statements hold here.
\end{lemma}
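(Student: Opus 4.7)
\medskip

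\noindent\textbf{Proof proposal.}
The plan is to mirror the proof of \lemmaref{blocksdc}, with the bookkeeping reduced from three indices to the two-index bidegree $(p_2,p_3)$ that is respected by the bracket on $\abounce$ (after $s_1$ is specialized to $1$, as in \lemmaref{Cspacebounce}).

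First I would dispatch the vanishing of $\ddc{p_2p_3 \to q_2q_3}$ whenever $q_i < p_i$ for some $i \in \{2,3\}$: this is immediate from \lemmaref{defretbounce}, which states that $d_C$ preserves the decreasing filtration $F_{\geq p_2p_3}C$ inherited from $\abounce$. That preservation is exactly the asserted block triangularity once $C$ is decomposed as $\bigoplus_{p_2p_3} G_{p_2p_3}C$.

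Next I would establish the $C^\infty(\spatialm,\R)$-linearity of the diagonal blocks $\ddc{p_2p_3 \to p_2p_3}$ using \lemmaref{fomdobounce}, which gives $d_C = A_1 D_1 + A_2 D_2 + A_3 D_3 + B$ with $B$ lower block triangular and the $A_i$ strictly lower block triangular. The strictness comes from the fact that in $\gamma$ the derivatives $D_1,D_2,D_3$ enter originally attached to the monomials $s_2 s_3$, $s_3 s_1$, $s_1 s_2$; on $\abounce$ these yield the strictly positive $(p_2,p_3)$-shifts $(1,1)$, $(0,1)$, $(1,0)$. Hence no $D_i$ contributes to any diagonal block, and only the diagonal of $B$ — whose entries are in $C^\infty(\spatialm,\R)$ — survives.

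The corresponding statements for $\ddc{p\to q}$ then follow by summing over $(p_2,p_3)$ with $p_2+p_3 = p$: componentwise triangularity in the bidegree implies triangularity in the total index, and the diagonal in $p$ is a direct sum of the diagonals in $(p_2,p_3)$, so the $C^\infty(\spatialm,\R)$-linearity is inherited. I do not foresee any serious obstacle; the one subtlety to check carefully is the weight-shift bookkeeping after the specialization $s_1 = 1$, which is exactly what forces a two-index statement here in place of the three-index version of \lemmaref{blocksdc}. Once the monomial structure of $\gamma$ is in hand this is routine.
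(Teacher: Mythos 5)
Your proposal is correct and follows essentially the same route as the paper: the paper omits an explicit proof here and relies on the analogy with \lemmaref{blocksdc}, whose one-line proof is exactly your key point, namely that $D_1,D_2,D_3$ enter $\gamma$ with $s_2s_3$, $s_3s_1$, $s_1s_2$ and therefore only populate strictly subdiagonal blocks in the $(p_2,p_3)$-grading, leaving the diagonal blocks $C^{\infty}(\spatialm,\R)$-linear. Your extra care about the specialization $s_1=1$ and the resulting two-index (rather than three-index) bookkeeping is a faithful elaboration of what the paper leaves implicit.
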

%------------------------------------------------------------------------
\subsection{Spectral sequence: the 0th page}\label{sec:0thpagebounce}
Analogously to \secref{0thpage}, we have:
\begin{itemize}
  \item The block
  $\ddc{00\to 00}$ is given by
  \[
      \ddc{0,00\to 00}
      =
      \begingroup
      \setlength\arraycolsep{4pt}
      \begin{pmatrix}
        -\tfrac{1}{2} & 0 & 0 & 0 & 0\\
         \tfrac{1}{2} & 0 & 0 & 0 & -\tfrac{u}{2}\\
         \tfrac{1}{2} & 0 & 0 & 0 & -\tfrac{1}{2u}\\
        0 & 0 & 0 & 0 & - \tfrac{1}{2}
      \end{pmatrix}
      \endgroup
  \]
  and
  $\ddc{1,00\to 00}
    =
    \begin{pmatrix}
      \tfrac{1+u^2}{6u}
      &
      \tfrac{1}{6u}
      &
      \tfrac{u}{6}
      &
      -\tfrac{1}{3}
    \end{pmatrix}$
  and $\ddc{2,00\to 00} = 0$.
%--------------------------------------------------------------
\item  The blocks
  $\ddc{10 \to 10}$
  and
  $\ddc{01 \to 01}$ are given by
  \[
             \ddc{0,10\to 10}
      =
      \begin{pmatrix}
          -1 & 0 \\
          0 & 0 \\
          0 & 0 \\
          0 & 0 \\
          -\tfrac{1+u^2}{2u} & 0 \\
          -\tfrac{1}{2} & -\tfrac{u}{2} \\
           \tfrac{u}{2} & -1 \\
                      0 & -1
      \end{pmatrix} 
    \qquad
      \ddc{0,01\to 01}
      =
      \begin{pmatrix}
          -1 & 0 \\
          0 & 0 \\
          0 & 0 \\
          0 & 0 \\
          -\tfrac{1+u^2}{2u} & 0 \\
          \tfrac{1}{2} & \tfrac{1}{2u} \\
           0 & 1 \\
           -\tfrac{1}{2u} & 1
      \end{pmatrix} 
  \]
  and
  \begin{align*}
        \ddc{1,10 \to 10}
    & =
    \begin{pmatrix}
      0 & 0 & 0 & 0 & \tfrac{u}{2} & -1 & \tfrac{-1+u^2}{2u} & \tfrac{1}{2u}
    \end{pmatrix}\\
    \ddc{1,01 \to 01}
    & =
    \begin{pmatrix}
      0 & 0 & 0 & 0 & \tfrac{1}{2u} & 1 & -\tfrac{u}{2} & \tfrac{-1+u^2}{2u}
    \end{pmatrix}
  \end{align*}
  and $\ddc{2,10\to 10} = 0$ and $\ddc{2,01\to 01} = 0$.
%--------------------------------------------------------------
\item
  The block $\ddc{20 \to 20}$ is zero because $\rank_{C^{\infty}}
  G_{20} C^k = 0,1,0,0$.\\
  The block $\ddc{02 \to 02}$ is zero because $\rank_{C^{\infty}}
  G_{02} C^k = 0,1,0,0$.
%--------------------------------------------------------------
\item The block
  $\ddc{11 \to 11}$ is given by
  \[
      \ddc{0,11\to 11}
      =
      \begin{pmatrix}
          -1 & 0 \\
          0 & 0 \\
          0 & 0 \\
          0 & 0 \\
          -\tfrac{1+u^2}{2u} & 0 \\
          0 & \tfrac{-1+u^2}{2u} \\
          \tfrac{1}{2u} & 0 \\
          -\tfrac{u}{2} & 0\\
          \tfrac{1}{2} & 0
      \end{pmatrix}
    \quad
    \ddc{1,11 \to 11}
    =
    \begin{pmatrix}
     -1 & 0 & 0 & 0 & \tfrac{1+u^2}{2u} & 0 & \tfrac{1}{2u} & -\tfrac{u}{2} & -1\\
      1 & 0 & 0 & 0 & 0 & 0 & 0 & 0 & 2\\
      0 & 1 & 0 & 0 & 0 & 0 & 0 & 0 & 0\\
      0 & 0 & 1 & 0 & 0 & 0 & 0 & 0 & 0\\
      0 & 0 & 0 & 1 & 0 & 0 & 0 & 0 & 0\\
      0 & 0 & 0 & 0 & -\tfrac{1}{2} & 0 & 0 & 0 & -\tfrac{1+u^2}{2u}\\
      0 & 0 & 0 & 0 & -\tfrac{1}{2} & 0 & -\tfrac{1}{2} & \tfrac{1}{2} & 0
    \end{pmatrix}
  \]
  and $\ddc{2,11\to 11} = 0$.
%--------------------------------------------------------------
\item The blocks
  $\ddc{21 \to 21}$ and
  $\ddc{12 \to 12}$ are given by
  \[
      \ddc{1,21\to 21}
      \;=\;
      \begin{pmatrix}
          2 \\
          0  \\
          0  \\
          0  \\
          -\tfrac{1+u^2}{2u} \\
          \tfrac{1}{2u}
      \end{pmatrix}
      \qquad
      \ddc{1,12\to 12}
      \;=\;
      \begin{pmatrix}
          2 \\
          0  \\
          0  \\
          0  \\
          -\tfrac{1+u^2}{2u} \\
          \tfrac{u}{2}
      \end{pmatrix}
    \]
    whereas
    $\ddc{0,21 \to 21} = \ddc{2,21\to 21} = 0$
    and
    $\ddc{0,12 \to 12} = \ddc{2,12\to 12} = 0$.
%--------------------------------------------------------------
\item The block $\ddc{22 \to 22}$ is given by
  \[
      \ddc{2,22\to 22}
      \;=\;
      \begin{pmatrix}
          6 \\
          0 \\
          0 \\
          0 \\
          \tfrac{1}{u} + u
      \end{pmatrix}
    \]
    whereas $\ddc{0,22 \to 22} = \ddc{1,22\to 22} = 0$.
\end{itemize}
%%%%%%%%%%%%%%%%%%%%%%%%%%%%%%%%%%%%%%%%%%%%%%%%%%%%%%%%%%%%%%%%%%%%

\begin{lemma}[Homology of the 0th page] \label{lemma:hom0pagebounce}
  Like \lemmaref{hom0page}.
\end{lemma}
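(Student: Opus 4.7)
The strategy mirrors the one for \lemmaref{hom0page}. By \lemmaref{blocksdcbounce} and the block-diagonal structure analogous to \eqref{eq:hd01234}, computing $H(D_{pp})$ reduces to computing each $H(\ddc{p_2p_3 \to p_2p_3})$ separately. The matrices for these differentials have been written out explicitly in \secref{0thpagebounce}, with entries in $\R$ depending only on the constants $u, g_2^1, g_3^1$, so the whole calculation is linear algebra over $\R$ extended freely to $C^\infty(\spatialm,\R)$.

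The plan is to work block by block. First, for each pair $(p_2p_3,k)$ I read off the rank of $\ddc{k,p_2p_3 \to p_2p_3}$ and of $\ddc{k-1,p_2p_3 \to p_2p_3}$ and compute the dimension of kernel modulo image. Freeness over $C^\infty(\spatialm,\R)$ is automatic once an $\R$-basis of representatives exists, since the matrices have constant entries, so the second step is to exhibit such a list of representatives in each case, giving a bounce analogue of \lemmaref{hbas}. Third, compare the resulting ranks to those in the table of \lemmaref{hom0page}; the check uses Assumption \ref{assp:mcbounce} in a uniform way, with $u > 0$ and $u \neq 1$ ensuring that $u$, $1+u^2$, and $u - \tfrac{1}{u}$ are nonzero (these play the role of the pairwise differences $g_i^0-g_j^0$ and of the sum $g_1^0+g_2^0+g_3^0$ from the free case), and $g_2^1, g_3^1 \neq 0$ plays the role of $g_1^1 \neq 0$.

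The blocks $\ddc{00 \to 00}$, $\ddc{10 \to 10}$, $\ddc{01 \to 01}$, $\ddc{20 \to 20}$, $\ddc{02 \to 02}$, $\ddc{21 \to 21}$, $\ddc{12 \to 12}$, $\ddc{22 \to 22}$ are small enough that row reduction on the explicit matrices yields the kernel, image and complement immediately. Before that, one verifies $(\ddc{p_2p_3 \to p_2p_3})^2 = 0$ directly, using whichever polynomial identity in $u$ is the bounce analogue of $g_2^0g_3^0+g_3^0g_1^0+g_1^0g_2^0 = 0$; this identity is supplied by $\gamma$ being a Maurer-Cartan element in $\abounce$ and can be read off from Assumption \ref{assp:mcbounce}. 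I expect the main obstacle to be the block $\ddc{11 \to 11}$, whose homology must vanish in all three degrees $k=0,1,2$: here one has to pair the $9 \times 2$ matrix $\ddc{0,11 \to 11}$ with the $7 \times 9$ matrix $\ddc{1,11 \to 11}$ above and check that the rank drops at each stage are exactly right. This is a finite rank computation that can be carried out by hand or by computer, as in \secref{0thpage}.

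Once the per-block ranks match the free-case table, the aggregation $H(D_{pp}) = \bigoplus_{p_2+p_3=p} H(\ddc{p_2p_3 \to p_2p_3})$, which holds here for the same off-diagonal-vanishing reason as in \secref{0thpage}, produces the second table of \lemmaref{hom0page}, completing the proof.
\qed
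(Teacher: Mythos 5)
Your proposal is correct and follows essentially the same route as the paper: the paper's (very terse) argument for \lemmaref{hom0page}, carried over verbatim to the bounce case, is exactly a block-by-block rank computation on the explicit constant-coefficient matrices of \secref{0thpagebounce}, with freeness established by exhibiting the bases of representatives (which the paper records in \lemmaref{hbasbounce}). Your added remarks --- that $u\neq 1$, $u>0$ play the role of the conditions on the $g_i^0$, and that the only nontrivial check is the vanishing of $H(\ddc{11\to 11})$ --- are accurate elaborations rather than a different method.
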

%%%%%%%%%%%%%%%%%%%%%%%%%%%%%%%%%%%%%%%%%%%%%%%%%%%%%%%%%%%%%%%%%%%
\begin{lemma}[Ordered bases of representatives]\label{lemma:hbasbounce}
  \rule{0pt}{0pt}
\begin{itemize}
  \item $H^0(\ddc{00 \to 00})$:
    $(0,1,0,0,0)$, 
    $(0,0,1,0,0)$,
    $(0,0,0,1,0)$.
 \item $H^1(\ddc{00 \to 00})$:
    $(\tfrac{2u}{1+u^2},0,0,1)$.
  \item 
    $H^1(\ddc{10 \to 10})$:
    $(0,0,0,0,-\tfrac{1}{u^2},0,0,1)$,
$(0,1,0,0,0,0,0,0)$,
$(0,0,1,0,0,0,0,0)$,\\
\sppx{25}$(0,0,0,1,0,0,0,0)$,
$(0,0,0,0,0,0,1,1-u^2)$.
  \item 
    $H^1(\ddc{01 \to 01})$:
$(0,0,0,0,1-u^2,0,0,1)$,
$(0,1,0,0,0,0,0,0)$,
$(0,0,1,0,0,0,0,0)$,\\
\sppx{25}$(0,0,0,1,0,0,0,0)$,
$(0,0,0,0,0,0,1,\tfrac{u^2}{-1+u^2})$.
\item $H^1(\ddc{20 \to 20})$ and
    $H^1(\ddc{02 \to 02})$: $(1)$.
  \item 
    $H^2(\ddc{21 \to 21})$:
$(0,1,0,0,0,0)$,
$(0,0,1,0,0,0)$,
$(0,0,0,1,0,0)$,\\
\sppx{25}$(1+\tfrac{1}{u^2},0,0,0,-\tfrac{1+u^2}{4u^3},0)$,
$(-\tfrac{4}{u},0,0,0,1+\tfrac{1}{u^2},0)$.
  \item 
    $H^2(\ddc{12 \to 12})$:
$(0,1,0,0,0,0)$,
$(0,0,1,0,0,0)$,
$(0,0,0,1,0,0)$,\\
\sppx{25}$(1+u^2,0,0,0,-\tfrac{1}{4} u(1+u^2),0)$,
$(-4u,0,0,0,1+u^2,0)$.
\item
    $H^3(\ddc{22 \to 22})$:
$(0,1,0,0,0)$
$(0,0,1,0,0)$,
$(0,0,0,1,0)$,
$(1,0,0,0,0)$.
\end{itemize}
\end{lemma}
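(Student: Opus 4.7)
The plan is to follow the same strategy as for Lemma \ref{lemma:hbas}: pure inspection of the explicit matrices for the blocks $\ddc{k,p_2p_3\to p_2p_3}$ listed in \secref{0thpagebounce}, now specialized to the bounce Maurer-Cartan element of Assumption \ref{assp:mcbounce}. Lemma \ref{lemma:hom0pagebounce} already tells us the rank of each homology module (these are identical to the ranks in \lemmaref{hom0page}, since the block ranks $\rank_{C^{\infty}} G_{p_2p_3}C^k$ are structural and don't depend on which MC-element we use). So the task is just to exhibit, in each case, a list of vectors of the claimed size and verify cycle/non-boundary by matrix multiplication.

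Concretely, for each block $(p_2p_3,k)$ I would proceed as follows. First, multiply each candidate row vector $v$ on the right by $\ddc{k,p_2p_3\to p_2p_3}$ (regarded as acting on the column of basis elements) and check that the result is zero; this uses only the explicit matrices in \secref{0thpagebounce} and Assumption \ref{assp:mcbounce} (in particular $u\neq 1$ and $u>0$, which ensure denominators like $u^2-1$, $u$, $1+u^2$ are nonzero). Second, check that the projection of $v$ onto a complement of $\image \ddc{k-1,p_2p_3\to p_2p_3}$ is nontrivial, and that the collection of projections is linearly independent in the quotient. For the low-dimensional blocks ($G_{20}$, $G_{02}$, $G_{22}$) and those where the kernel is full or zero, this verification is immediate.

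The only mildly delicate cases are $H^0(\ddc{00 \to 00})$, $H^1(\ddc{00\to 00})$, $H^1(\ddc{10\to 10})$, $H^1(\ddc{01\to 01})$, and $H^2(\ddc{21\to 21})$, $H^2(\ddc{12\to 12})$, where the candidate vectors involve the bounce parameter $u$. For these I would substitute the row vectors directly into the displayed matrices and simplify; for example, the first representative $(\tfrac{2u}{1+u^2},0,0,1)$ for $H^1(\ddc{00\to 00})$ should be checked to be annihilated by $\ddc{1,00\to 00}$ (using $\tfrac{2u}{1+u^2} \cdot \tfrac{1+u^2}{6u} - \tfrac{1}{3} = 0$), and not to lie in the image of $\ddc{0,00\to 00}$. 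Similar one-line algebraic identities cover each case.

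The main obstacle is not conceptual but bookkeeping: ensuring the listed vectors have the exact cardinality predicted by \lemmaref{hom0pagebounce} and that they give a basis rather than merely a generating set. I would handle this uniformly by also pointing out, for each block, a complement of $\ker \ddc{p_2p_3\to p_2p_3}$ analogous to \lemmaref{cbas} (in the bounce setting the same coordinate patterns work after the obvious reparametrization), so that the rank of homology reads off directly as $\rank \ker - \rank \image$ and matches the number of listed representatives. Once this is in place the proof reduces, exactly as in \lemmaref{hbas}, to the single sentence: by inspection of the matrices in \secref{0thpagebounce}.
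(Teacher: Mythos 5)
Your proposal takes essentially the same route as the paper: the paper proves the analogous \lemmaref{hbas} simply ``by inspection of the matrices,'' and \lemmaref{hbasbounce} is stated without further argument on the understanding that the identical inspection applies to the matrices of \secref{0thpagebounce} (using $u>0$, $u\neq 1$ exactly where you indicate). One minor caveat: your aside that the homology ranks agree with \lemmaref{hom0page} ``since the block ranks are structural'' is not a valid inference on its own --- the homology ranks depend on the differentials, not just the module ranks --- but this does no harm because your actual procedure (cycle check, non-boundary check, and rank--nullity count against a complement as in \lemmaref{cbas}) establishes the ranks independently.
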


%-----------------------------------------------------------------
\begin{lemma}[Choice of complements of the kernel] \label{lemma:cbasbounce}
  Like \lemmaref{cbas}.
\end{lemma}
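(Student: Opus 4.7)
The plan is to imitate the proof of \lemmaref{cbas} verbatim: by direct inspection of the explicit matrices for $\ddc{p_2p_3\to p_2p_3}$ listed in \secref{0thpagebounce}, confirm that the very same shapes as in \lemmaref{cbas} span complements of the kernels. Concretely, I claim:
\begin{align*}
  \textup{$G_{00}C^0$:} & \qquad (\ast,0,0,0,\ast)\\
  \textup{$G_{00}C^1$:} & \qquad (1,1,1,0)\\
  \textup{$G_{10}C^1$:} & \qquad (0,0,0,0,\ast,0,0,0)\\
  \textup{$G_{01}C^1$:} & \qquad (0,0,0,0,\ast,0,0,0)\\
  \textup{$G_{11}C^1$:} & \qquad (\ast,\ast,\ast,\ast,\ast,0,\ast,\ast,0)
\end{align*}
For every other $(p_2p_3,k)$, \lemmaref{hom0pagebounce} asserts that $\ker \ddc{p_2p_3\to p_2p_3}$ is either all of $G_{p_2p_3}C^k$ or trivial, so the complement is forced and there is nothing to prove.

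First I would dispose of the short cases. In $\ddc{0,00\to 00}$ columns $2,3,4$ vanish while columns $1$ and $5$ are linearly independent (using $u\neq 0$); in $\ddc{1,00\to 00}$ the sum of columns $1,2,3$ equals $\tfrac{(1+u)^2}{3u}\neq 0$, so $(1,1,1,0)$ survives modulo the kernel; in $\ddc{1,10\to 10}$ and $\ddc{1,01\to 01}$ the only nonzero entry among columns $5,\ldots,8$ sits in column $5$, with value $\tfrac{u}{2}$ and $\tfrac{1}{2u}$ respectively. In each case the proposed complement has the dimension dictated by \lemmaref{hom0pagebounce}, and the check reduces to confirming nonvanishing of a single rational expression in $u$.

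The main obstacle is the $G_{11}C^1$ case, which is where the argument requires actual linear algebra. Since the proposed complement has dimension $7$, equal to $\rank \image \ddc{1,11\to 11}$ by \lemmaref{hom0pagebounce}, it suffices to show that the $7\times 7$ submatrix of $\ddc{1,11\to 11}$ obtained by deleting columns $6$ and $9$ is invertible. Rows $3,4,5$ of this submatrix are standard basis vectors pivoting in columns $2,3,4$, reducing the problem to a $4\times 4$ minor on rows $1,2,6,7$ and columns $1,5,7,8$. The determinant of this minor is a rational function of $u$ that one checks equals $\tfrac{u^2-1}{8u}$, nonzero precisely because $u>0$ and $u\neq 1$ from Assumption \ref{assp:mcbounce}. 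This is the only point in the argument where the constraints on $u$ enter, and it is where any computational friction would lie.

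Once these inspections are done, the lemma follows exactly as in the free case of \lemmaref{cbas}.
\qed
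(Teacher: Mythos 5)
Your proof is correct and takes the same route as the paper: the paper's argument for \lemmaref{cbas} (and hence for this lemma) is literally ``by inspection of the matrices for the $\ddc{p_2p_3\to p_2p_3}$,'' and you carry that inspection out, with the decisive computation being the $G_{11}C^1$ case, where your determinant $\tfrac{u^2-1}{8u}$ is indeed what the $4\times 4$ minor evaluates to, nonzero precisely by $u>0$, $u\neq 1$. Two harmless slips in your side remarks: the image of $(1,1,1,0)$ under $\ddc{1,00\to 00}$ is $\tfrac{1+u^2}{3u}$, not $\tfrac{(1+u)^2}{3u}$ (both nonzero for $u>0$, so the conclusion stands), and in $\ddc{1,10\to 10}$ and $\ddc{1,01\to 01}$ the entries in columns $6,7,8$ are \emph{not} all zero --- but the only fact you actually need is that the column-$5$ entry ($\tfrac{u}{2}$, resp.\ $\tfrac{1}{2u}$) is nonzero, which it is.
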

%-----------------------------------------------------------------
\begin{lemma}[Contractions] \label{lemma:hipdrbounce}
  Like \lemmaref{hipdr}.
\end{lemma}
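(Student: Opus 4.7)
The plan is to follow the proof of \lemmaref{hipdr} unchanged in structure, replacing only the matrix data by that of \secref{0thpagebounce}. Recall that for each $(p_2,p_3)$ the operator $\ddc{p_2p_3\to p_2p_3}$ is $C^{\infty}(\spatialm,\R)$-linear, so the whole question is a bookkeeping exercise about direct sum decompositions of the free $C^{\infty}(\spatialm,\R)$-modules $G_{p_2p_3}C^k$. A standard contraction is uniquely determined by (a) a choice of $C^{\infty}(\spatialm,\R)$-complement of $\ker \ddc{p_2p_3\to p_2p_3}$ inside $G_{p_2p_3}C^k$, and (b) a choice of $C^{\infty}(\spatialm,\R)$-complement of $\image \ddc{p_2p_3\to p_2p_3}$ inside $\ker \ddc{p_2p_3\to p_2p_3}$. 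The first is prescribed by \lemmaref{cbasbounce}, the second by \lemmaref{hbasbounce}.

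The verification I would carry out, $(p_2,p_3,k)$ by $(p_2,p_3,k)$, is that these vectors really constitute the claimed complements: the representatives in \lemmaref{hbasbounce} must lie in $\ker \ddc{p_2p_3\to p_2p_3}$, be $C^{\infty}(\spatialm,\R)$-independent modulo $\image \ddc{p_2p_3\to p_2p_3}$, and have the right count; the complements in \lemmaref{cbasbounce} must be transverse to $\ker \ddc{p_2p_3\to p_2p_3}$ and have the complementary count. The required ranks are supplied by \lemmaref{hom0pagebounce}, and since every matrix entry in \secref{0thpagebounce} is a rational function of the fixed parameter $u>0$ with $u \neq 1$, each check is direct linear algebra that never encounters a vanishing denominator.

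Once these decompositions are in hand, I would define $i_{p_2p_3}$ as the embedding of the homology as chosen representatives; $p_{p_2p_3}$ as the projection onto the representative summand, annihilating both the image and the kernel-complement; and $h_{p_2p_3}$ as the map sending each element of $\image \ddc{p_2p_3\to p_2p_3}$ to its unique preimage in the kernel-complement, annihilating the representatives and the kernel-complement itself. The identities $d i = p d = h i = p h = h^2 = 0$, $p i = \mathbbm{1}$, and $i p = \mathbbm{1} - h d - d h$ then follow formally summand by summand. The block-diagonal maps $i_p, p_p, h_p$ are assembled using the splitting $V_p = \bigoplus_{p_2+p_3 = p} G_{p_2p_3}C$ from \defref{Cgradingsbounce}, exactly as in \lemmaref{hipdr}.

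The main obstacle is simply the bookkeeping for the largest block, $\ddc{11\to 11}$, whose two nontrivial matrices in \secref{0thpagebounce} are $9 \times 7$ and $7 \times 9$ and whose kernel-complement choice in \lemmaref{cbasbounce} has five free entries; but no new tool is required beyond the one that served \lemmaref{hipdr}, so conceptually the proof is a transcription.
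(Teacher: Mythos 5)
Your proposal is correct and matches the paper's (essentially unstated) argument: the paper simply declares this to be ``a contraction in standard form'' built from the data of Lemmas \ref{lemma:hbasbounce} and \ref{lemma:cbasbounce}, exactly the decomposition-and-verification you spell out. Your version is just a more explicit transcription of the same route, with no difference in substance.
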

%%%%%%%%%%%%%%%%%%%%%%%%%%%%%%%%%%%%%%%%%%%%%%%%%%%%%%%%%%%%%%%%%%%%%%%%%
\subsection{Spectral sequence: the 1st page} \label{sec:1stpagebounce}
Like \secref{1stpage},
but here we work relative to new bases.
Nevertheless, things have been arranged so that:
\begin{lemma}
  The matrices for
  $\Donezero'$, $\Dthreetwo$, $\Dfourthree$ are exactly as in \secref{1stpage},
  with the understanding that $g_1^1$ must be replaced by $1$.
\end{lemma}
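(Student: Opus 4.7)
The plan is a direct matrix computation in parallel to \secref{1stpage}, taking advantage of the fact that the bases in \lemmaref{hbasbounce} and the complements in \lemmaref{cbasbounce} have been tailored to make this identification possible.

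First I would apply the abstract block-formulas
\[
  \Donezero' \;=\; p_1^1\, \ddc{0,0\to 1}\, i_0^0,
  \qquad
  \Dthreetwo \;=\; p_3^2\, \ddc{1,2\to 3}\, i_2^1,
  \qquad
  \Dfourthree \;=\; p_4^3\, \ddc{2,3\to 4}\, i_3^2,
\]
which are valid here by \lemmaref{hipdrbounce} and the discussion repeating that of \secref{1stpage}. The claim thus reduces to computing the right-hand sides from the bounce data and matching them entrywise to the matrices of \secref{1stpage} after $g_1^1 \mapsto 1$. It is important to note that, while the 0th-page matrices of \secref{0thpagebounce} differ in detail from those of \secref{0thpage} (the $\chi$-dependence of the $\gamma_i^0$ is absorbed through the bracket $[\gamma,-]$), the resulting homologies are free $C^{\infty}(\spatialm,\R)$-modules of the same ranks as in \lemmaref{hom0page}, with representatives chosen in \lemmaref{hbasbounce} that depend on $u$ but not on $\chi$.

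Next I would compute each block $\ddc{k,p_2p_3 \to q_2q_3}$ appearing above by expanding $\gamma$ to low order in $t$ about $t = 0$ and applying the bracket as in the example following \lemmaref{defretbounce}, then compose with the constant injection and projection matrices. The essential cancellation is that the $u$-dependent scalings in the representatives of \lemmaref{hbasbounce} (for instance the $-\tfrac{1}{u^2}$ and $1-u^2$ in the bases for $H^1(\ddc{10\to 10})$ and $H^1(\ddc{01\to 01})$) precisely absorb the $u$-dependence carried by the off-diagonal blocks of $d_C$. The numerical coefficient $\gamma_1^1 = 1$ fixed in \asspref{mcbounce} then plays the role of the parameter $g_1^1$ of the free case, while $\gamma_2^1 = g_2^1$ and $\gamma_3^1 = g_3^1$ appear unchanged.

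The main obstacle is bookkeeping rather than conceptual. The time-dependence of $\gamma$ forces one to work with Taylor expansions of preimages under $\ker \oldw \to C$ in place of closed-form expressions, and the explicit off-diagonal blocks of $d_C$ are messier than in \secref{homology}. However, as noted after \lemmaref{defretbounce}, only finite-order Taylor expansions are needed, so the whole calculation is finite-dimensional and automatable. Once carried out, the agreement with \secref{1stpage} under $g_1^1 \mapsto 1$ is immediate by inspection, and the identity $\Dfourthree \Dthreetwo = 0$ and the kernel/cokernel statements of \lemmaref{1stpagehomology} transfer verbatim.
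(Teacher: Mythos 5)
Your proposal is correct and matches the paper's (implicit) argument: the paper offers no written proof for this lemma, relying instead on exactly the direct, automatable computation you describe --- evaluating the blocks $\ddc{k,p_2p_3\to q_2q_3}$ via finite-order Taylor expansion about $t=0$ as in the example after \lemmaref{defretbounce}, composing with the $i$, $p$, $h$ maps from \lemmaref{hipdrbounce} built on the bases of \lemmaref{hbasbounce}, and observing that these bases were chosen precisely so the $u$-dependence cancels and $\gamma_1^1=1$ takes the place of $g_1^1$. No further comparison is needed.
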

\begin{lemma} \label{lemma:1stpagehomologybounce}
  Like \lemmaref{1stpagehomology}.
\end{lemma}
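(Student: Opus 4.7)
The plan is to apply the argument of \lemmaref{1stpagehomology} essentially verbatim, exploiting the preceding lemma which tells us that the three relevant matrices for $\Donezero'$, $\Dthreetwo$, $\Dfourthree$ coincide with those from \secref{1stpage} after the substitution $g_1^1 \mapsto 1$. The strategy is therefore to verify, statement by statement, that each step of the original argument uses only features that remain valid under Assumption \ref{assp:mcbounce}.

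First, for $\ker \Donezero' \simeq \R^3$, the original argument relied on $D_1,D_2,D_3$ being a basis of left-invariant vector fields on an open subset of a Lie group, together with the observation that left-invariant vector fields commute with right-invariant vector fields, and that the space of right-invariant vector fields is $3$-dimensional. Both facts hold here; the specific non-abelian bracket relations $[D_2,D_3] = -2D_1$, $[D_3,D_1] = -2g_2^1 D_2$, $[D_1,D_2] = -2g_3^1 D_3$ in Assumption \ref{assp:mcbounce} play no role in this step, since the argument only sees $D_i$ through commutation with the right-invariant frame.

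Second, for $\ker \Dthreetwo = 0$, the injectivity is immediate by inspection of the explicit matrix; the leading nonzero entries are constants that already appeared in \secref{1stpage} and do not involve $g_1^1$. Third, for $\coker \Dfourthree = 0$, the original argument extracted the leftmost $4 \times 4$ block and exploited $D_2 = \p_2$ from \eqref{eq:d2cond} together with global existence for linear ordinary differential equations in $x^2$. The fully analogous condition \eqref{eq:d2condbounce} is built into Assumption \ref{assp:mcbounce}, so the same argument applies without modification.

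The only potential obstacle is to check that the now non-abelian commutation relations among the $D_i$ do not invalidate any step. However, none of the three arguments above ever computes or uses $[D_i,D_j]$, so there is nothing further to verify beyond the matrix-level identification provided by the preceding lemma. This is why it is legitimate to state the result simply as ``Like \lemmaref{1stpagehomology}.''
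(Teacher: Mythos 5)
Your proposal is correct and follows the same route as the paper, which for this lemma simply invokes the proof of Lemma \ref{lemma:1stpagehomology} verbatim: the matrices for $\Donezero'$, $\Dthreetwo$, $\Dfourthree$ are unchanged up to $g_1^1 \mapsto 1$, the kernel of $\Donezero'$ is again identified with the right-invariant vector fields, injectivity of $\Dthreetwo$ is read off the matrix, and surjectivity of $\Dfourthree$ uses \eqref{eq:d2condbounce} in place of \eqref{eq:d2cond}. Your explicit check that the non-abelian bracket relations in Assumption \ref{assp:mcbounce} play no role is a sensible addition but does not change the argument.
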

%---------------------------------------------------------------------
%%%%%%%%%%%%%%%%%%%%%%%%%%%%%%%%%%%%%%%%%%%%%%%%%%%%%%%%%%%%%%%%%%%%%%%%%
\subsection{Spectral sequence: the 2nd page} \label{sec:2ndpagebounce}
Like \secref{2ndpage}.
We claim that $\image \slashed{\delta} \subset \ker \slashedDfourthree$ and
\eqref{eq:d13eq2} still hold.
The formula for $\slashedDfourthree$ remains the same.
Here however
\begin{multline*}
  \slashed{\delta}\;=\;
  \left(
  \begingroup
  \setlength\arraycolsep{4pt}
  \begin{array}{c c c c c}
    -\tfrac{2(2+u^2)}{u^2} & -4D_1 - D_2D_3
    & -2D_2 & 2D_3 & 2u^2 \\
    -\tfrac{2}{u(1+u^2)}
    (D_3D_2 + 2D_1) & 0 & 0 & 0 & 0 \\
    0 & D_2D_2 & 0 & -4D_2 & 0 \\
    \tfrac{2}{u^2(1-u^2)} D_2 
     & 0 & D_2D_2 + 4g_3^1 & 0 & 0 \\
     \tfrac{2}{u(1+u^2)} 
    (D_2D_2 + 4g_3^1) & 0 & 0 & 0 & 0 \\
  \end{array}\endgroup\right. \displaybreak[0]\\
  \left.
  \begingroup
  \setlength\arraycolsep{4pt}
  \begin{array}{c c c c c}
    0 & D_3D_3 & 4D_3 & 0 & 0 \\
    -\tfrac{2u(1-u^2)}{1+u^2} 
    (D_3D_3 + 4 g_2^1) & 0 & 0 & 0 & 0 \\
    2(2u^2-1) & 2D_1 - D_2D_3
    & -2D_2 & 2D_3 & -\tfrac{2}{1-u^2} \\
    D_3
    & 2 g_2^1 D_2 & 2D_1 - D_2D_3 & 4g_2^1 & \frac{1}{1-u^2} D_3 \\
    \tfrac{2u(1-u^2)}{1+u^2}
    (D_2D_3 - 2 D_1) & 0 & 0 & 0 & 0
  \end{array}\endgroup\right)
\end{multline*}
By the same arguments, we again get
\eqref{eq:2by2sys} and then:
\begin{lemma}
  $\coker \Dthreeone = 0$.
\end{lemma}
\begin{lemma} \label{lemma:kerd13bounce}
  $\ker \Dthreeone = \ker \slashed{\delta} / \image \Donezero'$.
\end{lemma}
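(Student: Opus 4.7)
The plan is to mimic the argument preceding \lemmaref{kerd13} in the free case, exploiting the structural parallels between \secref{homology} and \secref{homologybounce} that have already been set up by the earlier lemmas in this section.

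First, by the construction of the spectral sequence, $\Dthreeone$ on the 2nd page acts on $H^1(D_{11})/\image \Donezero$ and is induced by $\delta$ modulo $\image \Dthreetwo$, with $\image \delta \subset \ker \Dfourthree$ by \lemmaref{1stpagehomologybounce}-type arguments. Because $\Donezero$ annihilates $H^1(D_{00})$, we have $\image \Donezero = \image \Donezero'$, so
\[
  \ker \Dthreeone \;=\; \{\, [x] \in H^1(D_{11})/\image \Donezero' \;:\; \delta x \in \image \Dthreetwo \,\}.
\]

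Second, I would reprove the equivalence $\delta x \in \image \Dthreetwo \Leftrightarrow \slashed{\delta} x = 0$ by replaying the sequence of row and column deletions that converted \eqref{eq:d13eq} into \eqref{eq:d13eq2} in the free case. The implication $\Rightarrow$ is immediate from the fact that each deleted row of $\delta$ corresponds to a position where $\Dthreetwo$ has its unique nonzero entry. The implication $\Leftarrow$ uses $\image \delta \subset \ker \Dfourthree$ to reconstruct a $y$ with $\Dthreetwo y = \delta x$ from the deleted rows. Both arguments depend only on the positions of the nonzero entries in $\Dthreetwo$ and $\Dfourthree$, and by the lemma in \secref{1stpagebounce} these are identical to the free case once $g_1^1$ is replaced by $1$, so the derivation transfers verbatim.

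Third, I would check $\image \Donezero' \subset \ker \slashed{\delta}$, either abstractly as a consequence of $d^2 = 0$ on the previous page combined with the deletion, or by direct matrix multiplication using the explicit $\Donezero'$ and the $\slashed{\delta}$ displayed above \lemmaref{kerd13bounce}. Combining the three observations gives $\ker \Dthreeone = \ker \slashed{\delta}/\image \Donezero'$. The only substantive step is the second; I expect this to be the main, albeit modest, obstacle, because it depends on the sparsity pattern of the bounce $\Dthreetwo$ and $\Dfourthree$ matching that of the free case. Since \secref{1stpagebounce} has already arranged precisely this coincidence, the check reduces to inspection rather than new computation.
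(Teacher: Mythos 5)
Your proposal matches the paper's proof, which for this lemma is literally ``by the same arguments'' as \secref{2ndpage}: the identification of $\ker\Dthreeone$ with the classes $[x]$ satisfying $\delta x\in\image\Dthreetwo$, the equivalence of that condition with $\slashed{\delta}x=0$ via the same row/column deletions (trivial in one direction, using $\image\delta\subset\ker\Dfourthree$ in the other), and $\image\Donezero'\subset\ker\slashed{\delta}$, all of which transfer because $\Dthreetwo$ and $\Dfourthree$ are unchanged up to $g_1^1\mapsto 1$. One cosmetic point: row $3$ is deleted not because $\Dthreetwo$ has a nonzero entry there but because column $3$ of $\Dfourthree$ forces that component to vanish on $\ker\Dfourthree$; since this only enlarges the set of deleted rows, the easy implication is unaffected and your argument goes through.
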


\section{Acknowledgements}
This work was done while M.R.~was at ETH Zurich, Switzerland.

\newcommand{\href}[2]{#2}


\begin{thebibliography}{}
\bibitem{rt1}
Reiterer M.~and Trubowitz E.,
\href{https://arxiv.org/abs/1812.11487}{arxiv.org/abs/1812.11487 (2018)}\\
\emph{The graded Lie algebra of general relativity}
\bibitem{rt2}
Reiterer M.~and Trubowitz E.,
\href{https://arxiv.org/abs/1905.09026}{arxiv.org/abs/1905.09026 (2019)}\\
\emph{Filtered expansions in general relativity I}
    \bibitem{fil}
Reiterer M.~and Trubowitz E., arxiv.org/abs/1505.06662 (2015)\\
\emph{Filtered expansions in general relativity and one BKL-bounce}
\bibitem{Gerstenhaber}
Gerstenhaber M., Ann.~of Math.~79 (1964), 59-103\\
\emph{On the deformation of rings and algebras}
  \bibitem{bkl}
    Lifshitz E.M.~and Khalatnikov I.M., Adv.~Phys.~12 (1963), 185-249\\
    \emph{Investigations in relativistic cosmology}\\
    Belinskii V.A., Khalatnikov I.M.~and Lifshitz E.M., Adv.~Phys.~19 (1970), 525-573\\
    \emph{Oscillatory Approach to a Singular Point in the Relativistic Cosmology}\\
Belinskii V.A., Khalatnikov I.M.~and Lifshitz E.M., Adv.~Phys.~31 (1982), 639-667\\
\emph{A general solution of the Einstein equations with a time singularity}
\end{thebibliography}
\end{document}